\renewcommand{\textbf}[1]{\begingroup\bfseries\mathversion{bold}#1\endgroup}
\newtheorem{thm}{Theorem}[section]
\newtheorem{defi}[thm]{Definition}
\newtheorem{corollary}[thm]{Corollary}
\newtheorem{prop}[thm]{Proposition}
\newtheorem{lemma}[thm]{Lemma}
\theoremstyle{definition}
\newtheorem{remark}[thm]{Remark}
\newcommand{\R}{\mathbb R}
\newcommand{\Z}{\mathbb Z}
\newcommand{\N}{\mathbb N}
\numberwithin{equation}{section}
\def\XXint#1#2#3{{\setbox0=\hbox{$#1{#2#3}{\int}$}
    \vcenter{\hbox{$#2#3$}}\kern-.5\wd0}}
\date{date}
\begin{document}
\title{Minimal Soft Lattice Theta Functions}
\author{Laurent B\'{e}termin\\ \\
Faculty of Mathematics, University of Vienna,\\ Oskar-Morgenstern-Platz 1, 1090 Vienna, Austria\\ \texttt{laurent.betermin@univie.ac.at}. ORCID id: 0000-0003-4070-3344 }
\date\today
\maketitle

\begin{abstract}
We study the minimality properties of a new type of ``soft" theta functions. For a lattice $L\subset \R^d$, a $L$-periodic distribution of mass $\mu_L$ and an other mass $\nu_z$ centred at $z\in \R^d$, we define, for all scaling parameter $\alpha>0$, the translated lattice theta function $\theta_{\mu_L+\nu_z}(\alpha)$ as the Gaussian interaction energy between $\nu_z$ and $\mu_L$.  We show that any strict local or global minimality result that is true in the point case $\mu=\nu=\delta_0$ also holds for $L\mapsto \theta_{\mu_L+\nu_0}(\alpha)$ and $z\mapsto \theta_{\mu_L+\nu_z}(\alpha)$ when the measures are radially symmetric with respect to the points of $L\cup \{z\}$ and sufficiently rescaled around them (i.e. at a low scale). The minimality at all scales is also proved when the radially symmetric measures are generated by a completely monotone kernel. The method is based on a generalized Jacobi transformation formula, some standard integral representations for lattice energies and an approximation argument. Furthermore, for the honeycomb lattice $\mathsf{H}$, the center of any primitive honeycomb is shown to minimize $z\mapsto \theta_{\mu_{\mathsf{H}}+\nu_z}(\alpha)$ and many applications are stated for other particular physically relevant lattices including the triangular, square, cubic, orthorhombic, body-centred-cubic and face-centred-cubic lattices. 
\end{abstract}

\noindent
\textbf{AMS Classification:}  Primary 74G65 ; Secondary 82B20,  \\
\textbf{Keywords:} Theta functions, Lattice energies, Crystal, Defects, Calculus of variations.

\section{Introduction}

The mathematical justification of crystal's shape and formation is a very difficult problem which has been actively studied (see \cite{BlancLewin-2015} and references therein). Indeed, the analytic or numerical investigation of static many-particle Hamiltonian's ground states plays a central role in the design of materials (see e.g. \cite{Torquato06}), but the large number of critical points as well as the nonlinearities emerging from the corresponding systems make this mathematical investigation very challenging. Thus, a first natural step is to study systems that are already in a periodic order and where the interaction between points is given by a radially symmetric potential. These potentials arise in physics models of matter (see e.g. \cite{Kaplan,LikossoftCMatter}) in the case of the Born-Oppenheimer adiabatic approximation: the electrons effects are neglected, the energy is reduced to the nuclei interactions (see \cite[p. 33]{CondensMatter}), and two-body potentials are the simplest way to express the total potential energy of the system (see \cite[p. 945]{CondensMatter}).

\medskip

This problem of minimizing a potential energy per point of the form 
$$
E_f[L]:=\sum_{p\in L} f(|p|^2),
$$
where $L$ is a Bravais lattice (see also Definition \ref{def-Ef}), which is the most simple possible periodic configuration of points, has received a lot of attention, especially in the following cases: Lennard-Jones type potentials \cite{YBLB,BetTheta15,Beterminlocal3d,Beterloc}, Morse potential \cite{LBMorse}, two-dimensional Thomas-Fermi model for solid \cite{Betermin:2014fy}, Coulombian renormalized energy \cite{Sandier_Serfaty,SerfRoug15,Betermin:2014rr}, completely monotone interaction potentials \cite{CohnKumar,BetTheta15,CKMRV2Theta}, Bose-Einstein Condensates \cite{Mueller:2002aa,AftBN}, diblocks and 3-blocks copolymer interactions \cite{CheOshita,LuoChenWei}, vortices in quantum ferrofluids \cite{Vorticesferrofluids}, inverse power laws (Epstein zeta function) \cite{Rankin,Eno2,Cassels,Diananda,Coulangeon:kx,CoulLazzarini,CoulSchurm2018}, and also in more general settings \cite{Coulangeon:2010uq,OptinonCM}. An important mathematical object, which appears to be central in this theory (see e.g. \cite{CohnKumar,BetTheta15}), is the lattice theta function. Given a $d$-dimensional Bravais lattice $L$, a scaling parameter $\alpha>0$ and a point $z\in \R^d$, we define the lattice theta function and the translated lattice theta function by
\begin{equation}\label{defthetapoint}
\theta_L(\alpha):=\sum_{p\in L} e^{-\pi \alpha |p|^2}\quad \textnormal{and}\quad \theta_{L+z}(\alpha):=\sum_{p\in L} e^{-\pi \alpha |p+z|^2}.
\end{equation}
Physically, $\theta_L(\alpha)$ can be viewed as the Gaussian self-interaction of $L$ and $\theta_{L+z}(\alpha)$ as the Gaussian interaction between point $z$ and lattice $L$. They actually are the energies per point of the so-called Gaussian Core Model (GCM) restricted to lattices. This model was initially introduced by Stillinger \cite{Stillinger76} and motivated by the Flory-Krigbaum potential between the centers-of-mass of two polymer chains in an athermal solvent \cite{FloryKrigbaum}. The phase diagram of the three-dimensional GCM has been numerically investigated for example in \cite{TorquatoGCM08}. Furthermore, two different problems concerning \eqref{defthetapoint} appear to be quite natural once $\alpha>0$ is fixed: 
\begin{itemize}
\item the minimization of $L\mapsto \theta_L(\alpha)$ among $d$-dimensional Bravais lattices with the same density;
\item the minimization of $z\mapsto \theta_{L+z}(\alpha)$ among vectors $z\in \R^d$, where $L$ is fixed.
\end{itemize}
These minimization problems have been studied by many authors, see e.g. \cite{Mont,Baernstein-1997,SarStromb,CohnKumar,Coulangeon:2010uq,NierTheta,Regev:2015kq,BeterminPetrache,Faulhuber:2016aa,Beterminlocal3d,Cohn:2016aa,BeterminKnuepfer-preprint,FaulhuberExtremalDet,CoulSchurm2018,FalhuberOperatorNorms,FaulhuberBeaver,FaulhuberSteinerberger2,CKMRV2Theta}, and two of the most significant results are due to Montgomery \cite{Mont} -- who proved the minimality of the triangular lattice for $L\mapsto \theta_L(\alpha)$ among two-dimensional Bravais lattices of any fixed density -- and Cohn, Kumar, Miller, Radchenko and Viazovska \cite{CKMRV2Theta} -- who recently proved the minimality among all periodic configurations with the same density of $\mathsf{E}_8$ and the Leech lattice in dimensions $d\in \{8,24\}$. Furthermore, in \cite{BeterminKnuepfer-preprint}, the minimizer of the translated theta function $z\mapsto \theta_{L+z}(\alpha)$, for a fixed lattice $L$, has been also proved to be connected to the optimal electrostatic interaction (and more general long-range weighted interaction energies) between periodic distributions of charges located on $L$, solving a conjecture stated by Born in \cite{Born-1921} about the optimality of the rock-salt structure.

\medskip

Since polymer chains can be seen as soft interpenetrable spheres with an extent of the order of their radius of gyration (see \cite[Sect. 3]{LikossoftCMatter}), we propose a generalization of the periodic GCM to mass interactions, in the same spirit as it has been done in dimension $d=2$ in \cite{BetKnupfdiffuse} with Kn\"upfer. We want to study the minimality properties of lattices $L$ and points $z$ for a kind of ``soft GCM" (SGCM), where the objects are smeared out. Furthermore, these mass interaction energies can also be viewed as the expectation values of the lattice theta function and translated lattice theta function defined by \eqref{defthetapoint} where the position of the lattice points (resp. the position of $z$) follow a radially symmetric probability distribution $\mu_L$ (resp. $\nu_z$). We therefore define (see also Definition \ref{def-softlatticetheta}) the translated soft lattice theta function by
$$
\theta_{\mu_L+\nu_z}(\alpha):=\mathbb{E}_{\mu,\nu}[\theta_{L+z}(\alpha)]=\sum_{p\in L} \iint_{\R^d\times \R^d} e^{-\pi \alpha | x+p-z-y|^2}d\mu(x)d\nu(y).
$$ 
The main goal of this paper is to derive some minimality properties of lattices $L$ and points $z$ for $(L,z)\mapsto \theta_{\mu_L+\nu_z}(\alpha)$, where $\alpha>0$ is fixed, generalizing our previous work \cite{BetKnupfdiffuse} to the Gaussian interaction potential in all dimensions. In particular, the idea is to link the critical points, strict local minima or global minima of $(L,z)\mapsto \theta_{L+z}(\alpha)$ with those of $(L,z)\mapsto\theta_{\mu_L+\nu_z}(\alpha)$.

\medskip

Minimizing $L\mapsto \theta_{\mu_L+\nu_0}(\alpha)$ and $z\mapsto \theta_{\mu_L+\nu_z}(\alpha)$ can be interpreted in terms of "defects" in the periodic SGCM. Thus, we want to understand at which scales the type (i.e. the profile of $\mu,\nu$) or the size (i.e. the size of the ball containing almost all the mass of the measures) of the defects do not play any role in these minimization problems. Indeed, many defects appear in perfect crystals and they give generally to the material its properties (corrosion resistance, softness, thermal expansion, etc.). In Solid-State Physics, two kinds of point defects are important (see e.g. Kaxiras \cite[Chap. 9]{Kaxiras}):
\begin{enumerate}
\item the extrinsic defects, such as a substitutional impurity, corresponding to $z=0$ in our model. An atom in a perfect crystal is substituted by another one of the same kind ($\mu=\nu$) or of a different kind ($\mu\neq \nu$). This impurity is usually chemically similar to the crystal's atom, with a similar size. We are looking for the minimizer of $L\mapsto \theta_{\mu_L+\nu_0}(\alpha)$, i.e. the Gaussian interaction between the mass $\nu_0$ centred at the origin and all the masses $\mu_L$ centred at lattice sites, among a class of lattices with the same density. This also includes the energy per point of the perfect crystal itself when $\mu=\nu$ (see the top line of Figure \ref{Defects});
\item the intrinsic defects, such as an interstitial defect, corresponding to $z\neq 0$ in our model. An additional atom is located somewhere in the unit cell of the crystal (but not at a lattice site) and is generally smaller and chemically different than the crystal's atoms. We are looking for the minimizer of $z\mapsto \theta_{\mu_L+\nu_z}(\alpha)$, i.e. the Gaussian interaction between the mass $\nu_z$ centred at $z$ and all the masses $\mu_L$ centred at lattice sites (see  the bottom line of Figure \ref{Defects}).
\end{enumerate}

\begin{figure}[!h]
\centering
\includegraphics[width=8cm]{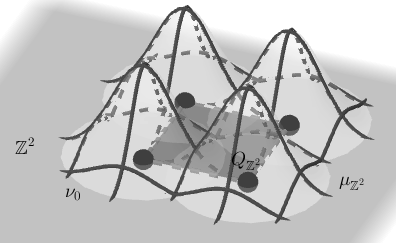} 
\includegraphics[width=8cm]{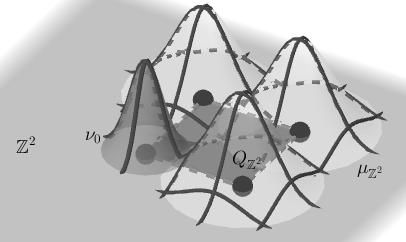}\\  \includegraphics[width=8cm]{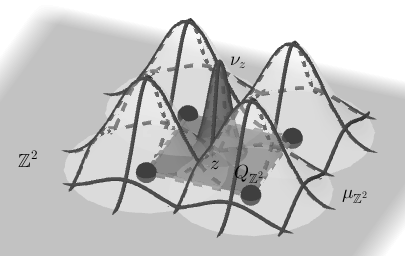} 
\caption{Example of different kinds of defects. The lattice is $L=\Z^2$ and the masses are all Gaussian, with different variances. We have chosen to represent only the primitive cell $Q_{\Z^2}$. On the top: the case without defect $\mu=\nu$ and $z=0$ (left), and the case of an extrinsic defect $\mu\neq \nu$ and $z=0$ (right). On the bottom: the case of a intrinsic defect $\mu\neq \nu$ and $z\neq 0$.}
\label{Defects}
\end{figure}

Using the methods developed in \cite{BetKnupfdiffuse} extended to the $d$-dimensional case and to the translated lattice theta function, we show that for any measures $\mu,\nu$ and any $\alpha>0$, we can rescale the measures around the points by sufficiently small factors $\varepsilon$ and $\delta$, getting two new measures $\mu^\varepsilon$ and $\nu^\delta$, such that any strict local or global minimality result which is true for $(L,z)\mapsto \theta_{L+z}(\alpha)$ is also true for $(L,z)\mapsto \theta_{\mu_L^\varepsilon+\nu^\delta_z}(\alpha)$. We will say that the minimality is true at a low scale.  In other words, there is a difference of scale between the lattice spacing and the radius of the balls where most of the masses are concentrated such that the measures (or the ``defects") do not play any role in terms of strict local or global minima. Furthermore, if $\mu$ and $\nu$ have densities of the form $x\mapsto \rho(|x|^2)$ where $\rho$ is the Laplace transform of a nonnegative Borel measure ($\rho$ is a completely monotone function), then the optimality occurs at all scales, i.e. for all such measures. As in \cite{BetKnupfdiffuse}, our results are purely qualitative: we do not give any value of $\varepsilon$ and $\delta$ such that the properties hold. However, it is interesting to see that, once $(\varepsilon,\delta)$ are below some threshold values $(\varepsilon_0,\delta_0)$, then the desired minimality occurs whatever the quotient $\varepsilon/\delta$ is. Furthermore, a numerical investigation have been made in \cite[Remark 14]{BetKnupfdiffuse} for the triangular lattice case with uniform measures on disks, showing that the strict local minimality of the triangular lattice proved for small values of the parameters $(\varepsilon,\delta)$ certainly does not hold when these parameters are too large.

\medskip

We notice that, as recalled in \cite{BetKnupfdiffuse}, this type of minimization problem involving smeared out particles appears in many physical and biological systems as condensed matter theory \cite{HeyesBranka}, quantum physics models \cite{BlancTFLattices}, diblock copolymer systems in the low volume fraction limit \cite{OhtaKawasaki}, magnetized disks interactions \cite{Disksinteractions} and swarming or flocking related models \cite{BalagueCarrilloEtal-2013,BernoffTopaz-2011,Carrilonowinteraction,TopazBertozzi-2004}.

\medskip

We have also devoted a complete section of the paper to the applications of our results. They mainly are corollaries of minimality results obtained in previous works in the point case $\mu=\nu=\delta_0$, i.e. for the lattice theta function and the translated lattice theta functions defined by \eqref{defthetapoint}, for some particular physically relevant lattices. We thought it was a good opportunity to review all those results in this paper in order to know what are the main open problems associated to these soft theta functions. Furthermore, we show that the minimum of $z\mapsto \theta_{\mathsf{H}+z}(\alpha)$, where $\mathsf{H}$ is a honeycomb lattice (see \eqref{def-honeycomb} and Figure \ref{Honeycomb}), is the centre of an honeycomb of $\mathsf{H}$, and the same is therefore true for the smeared out cases previously stated.

\medskip

In terms of generalization, it appears to be straightforward that all the results in this paper can be proved for more general energies of the form
$$
E_f[\mu_L+\nu_z]:=\sum_{p\in L} \iint_{\R^d\times \R^d} f(| x+p-z-y|^2)d\mu(x)d\nu(y),
$$
where $f$ is a  $L^1$ completely monotone summable function, as the one we have studied in \cite{BetKnupfdiffuse}. Since our original goal was to study a new kind of theta functions that could have other applications in Number Theory and Mathematical Physics, we did not extend our results to these types of energies. The reader can refer to \cite{BetKnupfdiffuse} for details. 

\medskip

\noindent \textbf{Plan of the paper.} After giving the definition of lattices, energies, measures and minimality at a low scale and at all scales in Section \ref{section-def}, we show some preliminary results in Section \ref{sect-prelim}, including the generalized Jacobi transformation formula. Our main results are stated and proved in Section \ref{sect-main} and many applications are given in Section \ref{sect-appli}, where the minimality of the primitive honeycomb's center in the honeycomb lattice case is also proved.
\section{Definitions}\label{section-def}

We start by defining the space of Bravais lattices with a fixed density as well as their unit cells and the notion of dual lattice. We call $(e_i)_{1\leq i\leq d}$ the orthonormal basis of $\R^d$, $|.|$ the euclidean norm on $\R^d$, $u\cdot v$ the associated scalar product of $u,v\in \R^d$ and $B_r$ the closed ball of radius $r>0$ and centred at the origin. Furthermore, we write $\mathcal{M}_d(\R)$ the space of $n\times n$ matrices with real coefficients.

\begin{defi}[Bravais lattice]
Let $d\geq 1$. We call $\mathcal{L}^\circ_d$ the space of $d$-dimensional Bravais lattices of the form $L=\bigoplus_{i=1}^d \Z u_i$ with basis $(u_1,...,u_d)\subset\R^d$ and covolume $1$, i.e. $\det(u_1,....,u_d)=1$. The unit cell (of volume 1) of such Bravais lattice $L$ is defined by 
\begin{equation*}
Q_L:=\left\{ x=\sum_{i=1}^d \lambda_i u_i \in \R^d, \lambda_i \in [0,1) \right\}.
\end{equation*}
Furthermore, the dual lattice of $L\in \mathcal{L}_d^\circ$ is defined by
$$
L^*:=\{x\in \R^d : \forall p\in L, x\cdot p\in \Z\}\in \mathcal{L}_d^\circ.
$$
\end{defi}

\noindent We also recall the definitions of the following important lattices belonging to $\mathcal{L}^\circ_d$ (see also Figure \ref{Lattices}):
\begin{align}
\textnormal{The triangular lattice}\quad &\Lambda_1:=\sqrt{\frac{2}{\sqrt{3}}}\left[\Z\left(1,0 \right)\oplus \Z\left( \frac{1}{2},\frac{\sqrt{3}}{2} \right)  \right];\label{Lambda1}\\
\textnormal{The (simple) cubic lattices}\quad &\Z^d,\quad d\geq 1;\label{cubic}\\
\textnormal{The orthorhombic lattices}\quad &\Z_a^d=\bigoplus_{i=1}^d \Z(a_i e_i), \quad \forall i,a_i>0,\quad \prod_{i=1}^d a_i=1;\label{ortho}\\
\textnormal{The Face-Centred-Cubic (FCC) lattice}\quad &\mathsf{D}_3:=2^{-\frac{1}{3}}\left[\Z(1,0,1)\oplus \Z(0,1,1)\oplus \Z(1,1,0)  \right];\label{FCC}\\
\textnormal{The Body-Centred-Cubic (BCC) lattice}\quad &\mathsf{D}_3^*:=2^{\frac{1}{3}}\left[\Z(1,0,0)\oplus \Z(0,1,0)\oplus \Z\left(\frac{1}{2},\frac{1}{2},\frac{1}{2}  \right)  \right]. \label{BCC}
\end{align}

\begin{figure}[!h]
\centering
\includegraphics[width=6cm]{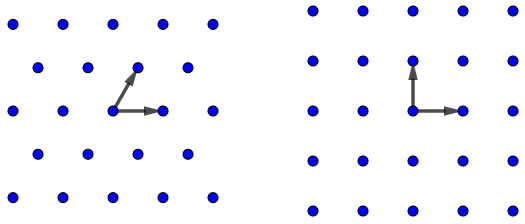} \\
\includegraphics[width=3cm]{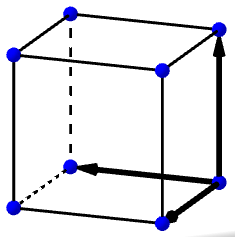} \quad \includegraphics[width=3cm]{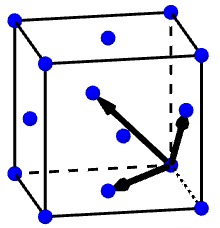}\quad \includegraphics[width=3cm]{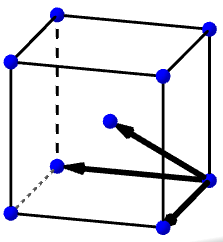}\\
\includegraphics[width=6cm]{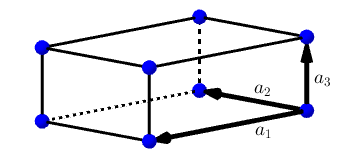}
 \caption{Representation of the triangular and square lattices $\Lambda_1,\Z^2$ (first line), the simple cubic, FCC and BCC lattices $\Z^3,\mathsf{D}_3,\mathsf{D}_3^*$ (second line), and the orthorhombic lattice $\Z_a^3$ (third line).}
\label{Lattices}
\end{figure}

These lattices are physically relevant because they correspond to the main crystal structures that exist in nature (see \cite[p. 8-9]{Kaxiras}) which are Bravais lattices. Notice for instance that among the 118 known elements, 21 (resp. 26) have a BCC (resp. FCC) structure. We will also consider the honeycomb lattice $\mathsf{H}$ defined by 
\begin{equation}\label{def-honeycomb}
\mathsf{H}:=\Lambda_1\cup(\Lambda_1+u),\quad u:=\sqrt{\frac{2}{\sqrt{3}}}\left(0,\frac{2}{\sqrt{3}}\right),
\end{equation}
and we call $\mathcal{H}$ one of its primitive hexagons (see Figure \ref{Honeycomb}). This is a typical example of periodic configuration, i.e. an union of translated Bravais lattices, that arises in Physics (e.g. as the structure of a graphene sheet).  Furthermore, we define $\mathsf{D}_4$, $\mathsf{E}_8$, $\mathsf{D}_d^+$ and the Leech lattice as in \cite{ConSloanPacking}. It turns out that these lattices have many interesting properties related to energy minimization or packing (see for instance \cite{SarStromb,Coulangeon:2010uq,CoulSchurm2018,Viazovska,CKMRV,CKMRV2Theta}).
\medskip

As explained in \cite[Sect. 1.4]{Terras_1988} (see also \cite[p. 14]{OptinonCM}), any lattice $L\in \mathcal{L}^\circ_d$ can be parametrized by a point $\bar{L}=(x_1,...,x_{n_d})\in\R^{n_d}$, where $n_d:=\frac{d(d+1)}{2}-1$, in a fundamental domain $\mathcal{D}_d$ where each lattice of $\mathcal{L}^\circ_d$ appears only once. Thus, as in \cite{BetKnupfdiffuse}, the metric on $\mathcal{D}_d$ is chosen as the euclidean metric on $\R^{n_d}$, $d(L,\Lambda)$ denotes the euclidean distance between two lattices $L,\Lambda\in \mathcal{D}_d$ and  $B_r(L)$ denotes the open ball of radius $r$ centred at $L$. We write $E\in \mathcal{C}^k(\mathcal{D}_d)$ if $E:\mathcal{D}_k\to \R$ is $k$-times differentiable with respect to the variables $(x_1,...,x_{n_d})$. We denote the gradient of $E$ by $\nabla E[L]:=(\partial_{x_1}E,...,\partial_{x_{n_d}}E)[L]$. The Hessian $D^2E \in \mathcal{M}_{n_d}(\R)$ is defined as the $n_d\times n_d$ real matrix of second derivatives with respect to $x_1,...,x_{n_d}$. $D^3 E$ is correspondingly the tensor of all third derivatives. The notions of strict local minimizer and critical point in $\mathcal{D}_d$ is defined as follows:
\begin{defi}
Let $d\geq 1$ and $E:\mathcal{D}_d\to \R$. We say that $L$ is a strict local minimizer of $E$ in $\mathcal{D}_d$ if there is $\eta>0$ such that $E[L]<E[\tilde{L}]$ for all $\tilde{L}\in B_\eta(L)$. Furthermore, $L$ is a critical point of $E$ in $\mathcal{D}_d$ if $\nabla E[L]=(0,...,0)$.
\end{defi}

Let us now focus on the interaction potentials we want to consider throughout this paper. Even though the main interaction potential will be the Gaussian one, it turns out that we will use many properties of more general lattice energies. We define two classes of functions that can be written as the Laplace transform of a measure.

\begin{defi}\label{def-classFdCMd}
Let $d\geq 1$. We say that $f\in \mathcal{F}_d$ if $|f(r)|=O(r^{-\frac{d}{2}-\sigma})$ as $r\to +\infty$ for some $\sigma>0$ and if $f$ can be represented as the Laplace transform of a Radon measure $\mu_f$, i.e.
$$
f(r)=\int_0^{+\infty} e^{-rt}d\mu_f(t).
$$
Furthermore we say that $f\in \mathcal{CM}_d$ if $f\in \mathcal{F}_d$ and $\mu_f$ is nonnegative.
\end{defi}

Furthermore, the energy per point of any $L\in \mathcal{D}_d$ interacting through a radial potential $f\in \mathcal{F}_d$ is defined by an absolutely convergent sum as follows.

\begin{defi}[Energy per point]\label{def-Ef}
Let $d\geq 1$, $L\in \mathcal{D}_d$ and $f\in \mathcal{F}_d$, then we define
\begin{equation}
E_f[L]:=\sum_{p\in L}f(|p|^2).
\end{equation}
\end{defi}

Because our goal is to study masses interactions, we need to specify what kind of measures we are working with. We note $\mathcal{P}(\R^d)$ the space of probability measures on $\R^d$ and $\mathcal{P}_r(\R^d)$ the space of probability measures on $\R^d$ that are rotationally symmetric with respect to the origin. Furthermore, we define the following subspace of $\mathcal{P}_r(\R^d)$:
\begin{equation*}
\mathcal{P}_r^{cm}(\R^d):=\left\{\mu\in \mathcal{P}_r(\R^d) : d\mu(x)=\rho(|x|^2)dx, \rho \in \mathcal{CM}_d   \right\}.
\end{equation*}

\begin{remark}[Completely monotone functions]
The notations $\mathcal{CM}_d$ and ``$cm$" mean that $\rho$ (or $f$ in Definition \ref{def-classFdCMd}) is a completely monotone function, i.e. $(-1)^k\rho^{(k)}(t)\geq 0$ for any $t>0$ and any $k\in \N$, which is indeed equivalent, by Hausdorff-Bernstein-Widder Theorem \cite{Bernstein-1929}, for $\rho$ to be the Laplace transform of a nonnegative Radon measure $\mu_\rho$.
\end{remark}

We now define the periodic measure $\mu_L$ that corresponds to the union of measures $\mu$ centred at all the points of a Bravais lattice $L$.

\begin{defi}[Periodized measure]
For any $L\in \mathcal{D}_d$ and any $\mu \in \mathcal{P}(\R^d)$, the periodized measure $\mu_L$ is defined by 
\begin{equation}\label{eq-periodmeasures}
\mu_L:=\sum_{p\in L} \mu_p,\quad \textnormal{where, for any $z\in \R^d$, $\mu_z:=\mu(\cdot -z)$}.
\end{equation}
\end{defi}
In Figure \ref{Measures}, we have represented two different kinds of measures $\mu_L$ for $\mu$ being in $\mathcal{P}_r(\R^2)$ and $\mathcal{P}_r^{cm}(\R^2)$. 
\medskip

\begin{figure}[!h]
\centering
\includegraphics[width=8cm]{MuLSelf.png} 
\includegraphics[width=8cm]{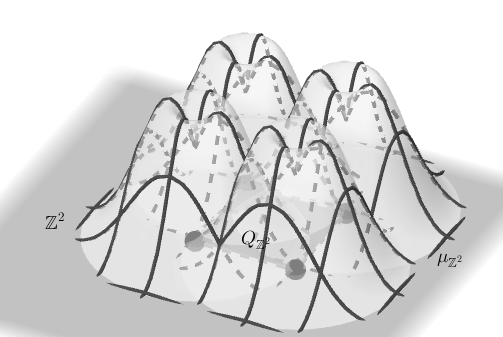}
\caption{Two kinds of periodic measures $\mu_L$, where $L=\Z^2$. The left-side one (resp. right-side one) is such that $\mu\in \mathcal{P}_r^{cm}(\R^2)$ (resp. $\mathcal{P}_r(\R^2)\backslash \mathcal{P}_r^{cm}(\R^2)$).}
\label{Measures}
\end{figure}

Furthermore, because we want to show the optimality of the previously defined lattices where the masses are sufficiently concentrated around the lattice sites, we define the rescaled measure as follows.

\begin{defi}[Rescaled measure]
Let $d\geq 1$. For any $\mu\in \mathcal{P}(\R^d)$ and any $\varepsilon>0$, the rescaled measure $\mu^\varepsilon$ is defined, for any measurable set $F\subset \R^d$, by
\begin{equation*}
\mu^\varepsilon(F):=\mu(\varepsilon F).
\end{equation*}
We write $\mu_L^\varepsilon$ and $\mu_z^\varepsilon$ the corresponding rescaled measures of $\mu_L$ and $\mu_z$ defined by \eqref{eq-periodmeasures}.
\end{defi}

We finally define the main energies that we are studying in this paper: the translated soft lattice theta function. It corresponds to the Gaussian interaction energy of the measure $\mu_L$ with itself or with an other measure $\nu_z$ located at $z\in \R^d$, and are therefore a generalization of the lattice and translated lattice theta functions $\theta_L(\alpha)$ and $\theta_{L+z}(\alpha)$ defined by \eqref{defthetapoint}.

\begin{defi}[Translated soft lattice theta functions]\label{def-softlatticetheta}
Let $d\geq 1$. For any $L\in \mathcal{D}_d$, any $\mu,\nu \in \mathcal{P}(\R^d)$, any $z\in \R^d$ and any $\alpha>0$, we define the translated soft lattice theta function of $\mu_L$ by the measure $\nu_z$ by
\begin{equation*}
\theta_{\mu_L+\nu_z}(\alpha):=\sum_{p\in L} \iint_{\R^d\times \R^d} e^{-\pi \alpha | x+p-z-y|^2}d\mu(x)d\nu(y).
\end{equation*}

\end{defi}

\begin{remark}[The honeycomb lattice case]
Since the honeycomb lattice $\mathsf{H}=\Lambda_1\cup (\Lambda_1+u)$ is a union of two translated triangular lattice, it is easy to compute its Gaussian energy per point. More precisely, we have
\begin{equation*}
\theta_{\mu_{\mathsf{H}}+\nu_z}(\alpha)=\frac{1}{2}\left(\theta_{\mu_{\Lambda_1}+\nu_z}(\alpha)+ \theta_{\mu_{\Lambda_1}+\nu_{z+u}}(\alpha)\right).
\end{equation*}
\end{remark}
\begin{remark}[Special cases]
We notice that:
\begin{itemize}
\item  For any $\mu\in \mathcal{P}(\R^d)$, any $L\in \mathcal{D}_d$ (or any periodic configuration) and any $\alpha>0$, it turns out that the self-interaction of $\mu_L$ is $\theta_{\mu_L + \mu_0}(\alpha)$ (i.e. when $\nu=\mu$ and $z=0$).
\item If $\mu=\nu=\delta_0$, then $\theta_{\mu_L+\nu_0}(\alpha)=\theta_L(\alpha)$ and $\theta_{\mu_L+\nu_z}(\alpha)=\theta_{L+z}(\alpha)$ as defined in \eqref{defthetapoint}.
\item The translated soft lattice theta function for the rescaled measures $\mu^\varepsilon$ and $\nu^\delta$ is given by
$$
\theta_{\mu^\varepsilon_{L}+\nu^\delta_z}(\alpha)= \sum_{p\in L} \iint_{\R^d\times \R^d} e^{-\pi \alpha | x+p-z-y|^2}d\mu^\varepsilon(x)d\nu^\delta(y).
$$
\end{itemize}
\end{remark}

%
The goal of this work is to study minimization problems for $L\mapsto \theta_{\mu_L+\nu_0}(\alpha)$ in $\mathcal{D}_d$ -- which includes the self-interaction of $\mu_L$ -- and $z\mapsto \theta_{\mu_L+\nu_z}(\alpha)$ in $Q_L$ for fixed $L$, at different scales. We therefore precise what we mean by being critical or minimal at a low scale and at all scales.

\begin{defi}[Criticality and minimality at all scales and at a low scale]
Let $d\geq 1$ and $\mu,\nu \in \mathcal{P}(\R^d)$, then:
\begin{enumerate}
\item We say that $L_0\in \mathcal{D}_d$ is a critical point or a (strict local) minimum of $L\mapsto \theta_{\mu_L+\nu_0}(\alpha)$ at all scales on $\mathcal{D}_d$  if, for any $\alpha>0$, $L_0$ is a critical point or a (strict local) minimum on $\mathcal{D}_d$ of $L\mapsto \theta_{\mu_L+\nu_0}(\alpha)$.
\item Let $\alpha_0>0$ be fixed. We say that $L_0\in \mathcal{D}_d$ is a critical point or a (strict local) minimum of $L\mapsto \theta_{\mu_L+\nu_0}(\alpha_0)$ at a low scale on $\mathcal{D}_d$ if there exist $\varepsilon_0>0$ and $\delta_0>0$ such that, for any $\varepsilon\in [0,\varepsilon_0)$ and any $\delta\in [0,\delta_0)$, $L_0$ is a critical point or a (strict local) minimum of $L\mapsto \theta_{\mu^\varepsilon_L+\nu^\delta_0}(\alpha_0)$ on $\mathcal{D}_d$.
\end{enumerate}
Furthermore, let $L$ be fixed, then we define the same notions of critical point and (strict local) minimality for $z\mapsto \theta_{\mu_L+\nu_z}(\alpha)$ of $z_0$ in $Q_L$ at all scales or at a low scale.
\end{defi}



\section{Preliminaries}\label{sect-prelim}

It is well-known that the lattice theta function satisfies the following identity, called Jacobi Transformation Formula (see e.g. \cite[Thm. A]{SaffLongRange} or \cite{Bochnertheta} for a general formula involving harmonic polynomials): for any $L\in \mathcal{D}_d$, any $z\in \R^d$ and any $\alpha>0$,
\begin{equation}\label{eq-jacobipoint}
\theta_{L+z}(\alpha):=\sum_{p\in L} e^{-\pi \alpha |p+z|^2}=\frac{1}{\alpha^{\frac{d}{2}}}\sum_{q\in L^*} e^{-\frac{\pi |q|^2}{\alpha}}e^{2i \pi q\cdot z}.
\end{equation}

\noindent We generalize this formula to mass interaction in the following result.

\begin{prop}[Generalized Jacobi Transformation Formula]\label{prop-Jacobigeneral}
For any $d\geq 1$, any $L\in \mathcal{D}_d$, any $\mu,\nu\in \mathcal{P}_r(\R^d)$, any $z\in \R^d$ and any $\alpha>0$, we have
\begin{equation}\label{Jacobigeneral}
\theta_{\mu_L+\nu_z}(\alpha) =\frac{\Gamma\hspace{-1mm} \left( \frac{d}{2} \right)^2}{\alpha^{\frac{d}{2}}} \sum_{q\in L^*} \frac{e^{-\frac{\pi |q|^2}{\alpha}}g_\mu(|q|)g_\nu(|q|)}{|q|^{d-2}}e^{2i \pi q\cdot z},
\end{equation}
where, for any $m\in \mathcal{P}_r(\R^d)$, 
\begin{equation*}
g_m(|q|):= \int_0^\infty J_{\frac{d}{2}-1}(4\pi s |q|)s^{1-\frac{d}{2}}d\psi_m(s),
\end{equation*}
and $\psi_m$ is the Lebesgue-Stieltjes measure of $t\mapsto m(B_t)$, i.e. $\psi_m([r_1,r_2))=m(B_{r_2})-m(B_{r_1})$ and $J_\beta$ is the Bessel function of the first kind.
\end{prop}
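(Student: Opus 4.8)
The plan is to reduce the statement to the classical Jacobi formula \eqref{eq-jacobipoint} applied pointwise \emph{inside} the mass integrals, and then to evaluate the Fourier transforms of the radial measures $\mu$ and $\nu$ by a spherical-average (Bochner) computation that produces the Bessel factors. First I would rewrite the defining double integral: for fixed $x,y$ one has $|x+p-z-y|^2=|p+(x-z-y)|^2$, so the inner lattice sum is exactly the point-case translated theta function $\theta_{L+(x-z-y)}(\alpha)$. Since the integrand is nonnegative, Tonelli's theorem legitimates all rearrangements at this stage, and applying \eqref{eq-jacobipoint} gives
\begin{equation*}
\theta_{\mu_L+\nu_z}(\alpha)=\frac{1}{\alpha^{d/2}}\iint_{\R^d\times\R^d}\sum_{q\in L^*}e^{-\frac{\pi|q|^2}{\alpha}}e^{2i\pi q\cdot(x-z-y)}\,d\mu(x)d\nu(y).
\end{equation*}
The Gaussian weight $e^{-\pi|q|^2/\alpha}$ makes the $q$-sum absolutely convergent uniformly in $x,y$, while the phases are bounded and $\mu,\nu$ are probability measures; hence Fubini applies and I may pull the sum outside.

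Next I would factorize. Writing $e^{2i\pi q\cdot(x-z-y)}=e^{-2i\pi q\cdot z}\,e^{2i\pi q\cdot x}\,e^{-2i\pi q\cdot y}$ splits the double integral into the product of the two Fourier transforms $\int e^{2i\pi q\cdot x}d\mu(x)$ and $\int e^{-2i\pi q\cdot y}d\nu(y)$, times the phase $e^{-2i\pi q\cdot z}$. Because $\mu,\nu\in\mathcal{P}_r(\R^d)$, these transforms are real and depend only on $|q|$, and the whole summand is invariant under $q\mapsto -q$; relabelling the sum then converts $e^{-2i\pi q\cdot z}$ into the $e^{2i\pi q\cdot z}$ of \eqref{Jacobigeneral}. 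The only remaining task is to identify each transform with the Hankel-type integral $g_m$, $m\in\{\mu,\nu\}$.

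For this I would disintegrate a radial measure over concentric spheres: if $\psi_m$ is the Lebesgue--Stieltjes measure of $t\mapsto m(B_t)$, then for bounded $F$ one has $\int_{\R^d}F\,dm=\int_0^\infty\bigl(\tfrac{1}{|S^{d-1}|}\int_{S^{d-1}}F(s\omega)\,d\sigma(\omega)\bigr)\,d\psi_m(s)$, the angular part being uniform by rotational symmetry. The classical spherical-average identity $\int_{S^{d-1}}e^{i\eta\cdot\omega}\,d\sigma(\omega)=(2\pi)^{d/2}|\eta|^{1-d/2}J_{d/2-1}(|\eta|)$, applied with $\eta$ proportional to $s q$ and combined with $|S^{d-1}|=2\pi^{d/2}/\Gamma\!\left(\frac d2\right)$, produces exactly a Bessel factor $J_{d/2-1}$ evaluated at a multiple of $s|q|$ against $s^{1-d/2}\,d\psi_m(s)$, i.e. the integrand defining $g_m$. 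Collecting the resulting $\Gamma\!\left(\frac d2\right)$-factors and the powers of $|q|$ from the two transforms then reconstitutes the prefactor $\Gamma\!\left(\frac d2\right)^2/|q|^{d-2}$, the normalization being pinned down by the fact that each transform equals $1$ at $q=0$. Convergence of $g_m$ is harmless: near $s=0$ the Bessel asymptotic $J_{d/2-1}(x)\sim (x/2)^{d/2-1}/\Gamma\!\left(\frac d2\right)$ cancels the $s^{1-d/2}$ singularity, while for large $s$ the factor is bounded and $\psi_m$ is a finite (probability) measure.

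The hard part will be the rigorous radial Fourier transform for a \emph{general} measure $m$ that need not be absolutely continuous, that is, justifying the disintegration over spheres and the interchange of the spherical average with $d\psi_m(s)$ without a density. A second delicate point is the reading of the $q=0$ term: there $|q|^{2-d}$ is singular for $d\geq 3$ and must be understood as the limit $|q|\to 0$, which by the small-argument Bessel expansion makes $g_\mu(|q|)g_\nu(|q|)$ behave like $|q|^{d-2}$ times an explicit constant, consistent with the genuine value $\alpha^{-d/2}$ arising from $\int d\mu=\int d\nu=1$; the degenerate dimensions $d\in\{1,2\}$ are most safely verified by direct inspection.
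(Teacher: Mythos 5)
Your proposal is correct and follows essentially the same route as the paper's proof: apply the classical Jacobi formula \eqref{eq-jacobipoint} inside the double integral, interchange sum and integrals by Fubini--Tonelli, factor the phase $e^{2i\pi q\cdot(x-z-y)}$ into the Fourier transforms of the radial measures $\mu$ and $\nu$, and identify each transform with the Hankel-type integral $g_m$ --- the only difference being that the paper cites the Hankel--Stieltjes transform for radial measures from the literature where you rederive it via the disintegration over spheres and Bochner's spherical-average identity. Your additional attention to the $q\mapsto -q$ relabelling of the phase, the interpretation of the $q=0$ term for $d\geq 3$, and the convergence of $g_m$ near $s=0$ only makes explicit points the paper leaves implicit.
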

\begin{proof}
By the classical Jacobi Transformation Formula \eqref{eq-jacobipoint} and Fubini's Theorem, we get
\begin{align*}
\theta_{\mu_L+\nu_z}(\alpha) &=\iint_{\R^2\times \R^2} \sum_{p\in L} e^{-\pi\alpha |p+x-z-y|^2}d\mu(x)d\nu(y)\\
&=\frac{1}{\alpha^{\frac{d}{2}}}\iint_{\R^2\times \R^2} \sum_{q\in L^*} e^{2 i \pi q \cdot (x-y-z)}e^{-\frac{\pi|q|^2}{\alpha}}d\mu(x)d\nu(y)\\
&= \frac{1}{\alpha^{\frac{d}{2}}} \sum_{q\in L^*}e^{-\frac{\pi|q|^2}{\alpha}}e^{-2i \pi q\cdot z}\left(\int_{\R^2} e^{2i\pi q\cdot x}d\mu(x)\right)\left(\int_{\R^2} e^{-2i\pi q\cdot y}d\nu(y)\right)\\
&=\frac{1}{\alpha^{\frac{d}{2}}} \sum_{q\in L^*}e^{-\frac{\pi|q|^2}{\alpha}}e^{-2i \pi q\cdot z} \hat{\mu}(2q)\hat{\nu}(2q),
\end{align*}
where $\hat{m}$ is the notation for the Fourier transform of a measure $m\in \mathcal{P}_r(\R^d)$. We now recall that $\hat{m}$ is given by the Hankel-Stieltjes transform (see e.g. \cite[Section 2]{ConnetSchwartz}), i.e. for any $x\in \R^d$, 
\begin{equation*}
\hat{m}(x)=\int_{\R^2} e^{-i \pi x\cdot y}dm(y)=\frac{2^{\frac{d}{2}-1}\Gamma\left(\frac{d}{2} \right)}{|x|^{\frac{d}{2}-1}}\int_0^{\infty} J_{\frac{d}{2}-1}(2\pi s |x|)s^{1-\frac{d}{2}}d\psi_m(s),
\end{equation*}
where $\psi_m$ is the Lebesgue-Stieltjes measure of $t\mapsto m(B_t)$, and the proof is completed.
\end{proof}
\begin{remark}[Value at the origin and connection with our previous work]
The value for $q=0$ is well-defined. Indeed, using the Taylor expansion of $J_{\frac{d}{2}-1}$ (see e.g. \cite[Eq. (9.1.30)]{AbraStegun}) and the fact that $\psi_m\in \mathcal{P}(\R_+)$, we obtain
$$
\lim_{q\to 0} \Gamma\hspace{-1mm}\left( \frac{d}{2}\right)^2\frac{e^{-\frac{\pi |q|^2}{\alpha}}g_\mu(|q|)g_\nu(|q|)}{|q|^{d-2}}e^{2i \pi q\cdot z}=1.
$$
We furthermore notice that for $d=2$, $z=0$ and $\mu=\nu$, we recover the formula proved in \cite[Prop. 7]{BetKnupfdiffuse}, up to a factor $2$ in the argument of $J_{\frac{d}{2}-1}$ that we have corrected and which does not influence the final result.
\end{remark}

\noindent We now give some basic facts that will be used in the proofs of our main results.

\begin{lemma}\label{lem-L1}
Let $d\geq 1$. If $f,g\in \mathcal{F}_d$, then $fg\in \mathcal{F}_d$.
\end{lemma}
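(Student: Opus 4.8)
The plan is to check the two conditions in Definition \ref{def-classFdCMd} for the product $fg$: the decay at infinity and the existence of a Radon measure whose Laplace transform is $fg$. The decay is the easy part. Writing $|f(r)| = O(r^{-\frac{d}{2}-\sigma_1})$ and $|g(r)| = O(r^{-\frac{d}{2}-\sigma_2})$ with $\sigma_1,\sigma_2 > 0$, the product obeys $|f(r)g(r)| = O(r^{-d-\sigma_1-\sigma_2})$ as $r\to+\infty$; since $d\geq 1$ this is in particular $O(r^{-\frac{d}{2}-\sigma'})$ with, e.g., $\sigma' = \sigma_1 > 0$, so the growth condition holds.

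For the representation, I would use that a product of Laplace transforms is the Laplace transform of a convolution of the underlying measures. Starting from the representations
\[
f(r)=\int_0^\infty e^{-rt}\,d\mu_f(t),\qquad g(r)=\int_0^\infty e^{-rs}\,d\mu_g(s),
\]
and using that both integrals converge absolutely for $r>0$, Fubini's Theorem applied to the product measure $\mu_f\otimes\mu_g$ on $[0,\infty)^2$ gives
\[
f(r)g(r)=\iint_{[0,\infty)^2} e^{-r(t+s)}\,d(\mu_f\otimes\mu_g)(t,s).
\]
I would then define $\mu_{fg}$ to be the push-forward of $\mu_f\otimes\mu_g$ under the addition map $(t,s)\mapsto t+s$, i.e. the convolution $\mu_f*\mu_g$. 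The change-of-variables formula for push-forwards turns the last display into $f(r)g(r)=\int_0^\infty e^{-ru}\,d\mu_{fg}(u)$, which is exactly the required Laplace representation.

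The hard part --- though a modest one --- is to make sure that $\mu_{fg}$ is a genuine Radon (locally finite) measure, since $\mu_f$ and $\mu_g$ may be signed and of infinite total mass. For this I would bound, for each fixed $R>0$,
\[
|\mu_{fg}|([0,R])\leq (|\mu_f|\otimes|\mu_g|)\bigl(\{(t,s):t+s\leq R\}\bigr),
\]
and note that the triangle $\{t+s\leq R\}$ is compact in $[0,\infty)^2$, while $|\mu_f|\otimes|\mu_g|$ is the product of two nonnegative $\sigma$-finite Radon measures (every Radon measure on the $\sigma$-compact space $[0,\infty)$ is $\sigma$-finite), hence itself Radon and finite on compact sets. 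This gives $|\mu_{fg}|([0,R])<\infty$ for all $R$, so $\mu_{fg}$ is locally finite; the same bound, together with $e^{-r(t+s)}\leq 1$, also justifies the Fubini step above. Combining the Laplace representation with the decay estimate then shows $fg\in\mathcal{F}_d$.
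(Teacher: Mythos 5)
Your proof is correct and follows essentially the same route as the paper: both arguments identify $fg$ as the Laplace transform of the convolution $\mu_f \ast \mu_g$ and observe that the decay condition is inherited immediately from the two factors. The only difference is one of detail, not of method --- the paper simply cites a standard convolution theorem for Laplace transforms ($\mathcal{L}^{-1}[f]\ast\mathcal{L}^{-1}[g]=\mathcal{L}^{-1}[fg]$), whereas you prove that identity directly via Fubini and the push-forward under addition, and in addition verify local finiteness of $\mu_f\ast\mu_g$ for possibly signed, infinite-mass Radon measures, a point the paper leaves implicit.
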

\begin{proof}
If $f$ and $g$ have both this representation as the Laplace transform of a measure, the formula $\mathcal{L}^{-1}[f]\ast \mathcal{L}^{-1}[g]=\mathcal{L}^{-1}[fg]$ (see e.g. \cite[Thm. 5.3.11]{Poularikas}) shows that $fg$ has also this representation. Moreover, we naturally have $|f(r)g(r)|=O(r^{-d/2-\sigma})$ for some $\sigma>0$ as $r\to +\infty$, because it is the case for both functions.
\end{proof}

\noindent The next lemma is a simple consequence of the Jacobi Transformation Formula \eqref{eq-jacobipoint}.

\begin{lemma}\label{lem-L2}
Let $d\geq 1$ and $\alpha>0$, then $L_0$ is a critical point or a (strict local) minimizer of $L\mapsto \theta_L(\alpha)$ on $\mathcal{D}_d$  if and only if $L_0^*$ is a critical point or a (strict local) minimizer  of $L\mapsto \theta_L(1/\alpha)$ on $\mathcal{D}_d$ . Furthermore, if $L_0$ is the unique global minimizer of $L\mapsto \theta_L(\alpha)$ on $\mathcal{D}_d$  for all $\alpha>0$, then $L_0=L_0^*$.
\end{lemma}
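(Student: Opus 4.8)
The plan is to read the Jacobi Transformation Formula \eqref{eq-jacobipoint} at $z=0$, namely
\begin{equation*}
\theta_L(\alpha)=\frac{1}{\alpha^{\frac{d}{2}}}\,\theta_{L^*}\!\left(\frac1\alpha\right),\qquad L\in\mathcal{D}_d,\ \alpha>0,
\end{equation*}
and to combine it with the observation that duality $\Phi\colon L\mapsto L^*$ is a smooth involution of $\mathcal{D}_d$. Writing the identity as $\theta_\cdot(\alpha)=\alpha^{-d/2}\,\bigl(\theta_\cdot(1/\alpha)\circ\Phi\bigr)$ and noting that $\alpha^{-d/2}>0$ is a constant independent of $L$, I observe that this prefactor alters neither the vanishing locus of the gradient nor any of the (strict, local, or global) minimality inequalities. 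Hence the whole statement reduces to transporting the three notions of criticality, strict local minimality, and global minimality through the map $\Phi$.

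First I would establish that $\Phi$ is a diffeomorphism of $\mathcal{D}_d$ onto itself. Writing $L=M\Z^d$ with $\det M=1$, one has $L^*=(M^T)^{-1}\Z^d$, and for $R\in O(d)$ the computation $(RM\Z^d)^*=R\,(M^T)^{-1}\Z^d$ shows that duality commutes with rotations; therefore $\Phi$ descends to a well-defined smooth map on the space of lattices modulo rotations parametrized by $\mathcal{D}_d$. The relation $(L^*)^*=L$ gives $\Phi\circ\Phi=\mathrm{id}$, so $\Phi$ is its own inverse and is in particular a homeomorphism.

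Next I would run the transport argument. For criticality, the chain rule yields $\nabla\bigl(\theta_\cdot(1/\alpha)\circ\Phi\bigr)[L_0]=D\Phi[L_0]^{T}\,\nabla\theta_\cdot(1/\alpha)[L_0^*]$, and invertibility of $D\Phi[L_0]$ gives that one gradient vanishes if and only if the other does. For strict local minimality, $\Phi$ being a homeomorphism maps a neighbourhood of $L_0$ bijectively onto a neighbourhood of $L_0^*$, so $L_0$ is a strict local minimizer of $\theta_\cdot(1/\alpha)\circ\Phi$ if and only if $L_0^*$ is one of $\theta_\cdot(1/\alpha)$; the same bijection treats global minimality verbatim. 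Combined with the positive prefactor, this proves the stated equivalence and, as a byproduct, the analogous equivalence for the (unique) global minimizer. For the final assertion, I fix $\alpha>0$, apply this global-minimizer equivalence with parameter $1/\alpha$ to get that $L_0^*$ is the unique global minimizer of $\theta_\cdot(\alpha)$, and then use the hypothesis at scale $\alpha$ together with uniqueness to conclude $L_0^*=L_0$.

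I expect the only genuine subtlety to be the first step above: verifying that $\Phi$ is well-defined and smooth on the fundamental domain $\mathcal{D}_d$ itself (not merely on $\mathcal{L}_d^\circ$), i.e.\ that duality is compatible with the identifications---rotations and change of basis by $GL_d(\Z)$---built into the parametrization. Once this is granted, the remainder is a purely formal consequence of multiplying by a positive scalar and precomposing with a smooth involution.
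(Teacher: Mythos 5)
Your proposal is correct and follows exactly the route the paper intends: the paper offers no written proof, merely noting that the lemma ``is a simple consequence of the Jacobi Transformation Formula,'' and your argument --- reading \eqref{eq-jacobipoint} at $z=0$ as $\theta_L(\alpha)=\alpha^{-d/2}\theta_{L^*}(1/\alpha)$, discarding the positive constant prefactor, and transporting criticality and (strict local/global) minimality through the duality involution $L\mapsto L^*$ of $\mathcal{D}_d$, then using uniqueness at each fixed $\alpha$ to force $L_0^*=L_0$ --- supplies precisely the standard details behind that remark.
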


\noindent The following results come from the simple fact that, for any $f\in \mathcal{F}_d$, any $L\in \mathcal{D}_d$ and any $z\in \R^d$,
$$
E_f[L+z]:=\sum_{p\in L} f(|p+z|^2)=\int_0^{+\infty}\theta_{L+z}\left(\frac{t}{\pi}\right)d\mu_f(t),
$$
as explained e.g. in \cite[Sect. 3.1]{BetTheta15}.

\begin{lemma}\label{lem-L3}
Let $d\geq 1$. If $L_0$ is a critical point of $L\mapsto\theta_L(\alpha)$ in $\mathcal{D}_d$ for all $\alpha>0$, then, for any $f\in \mathcal{F}_d$, $L_0$ and $L_0^*$ are critical points of $L\mapsto E_f[L]$ in $\mathcal{D}_d$.
\end{lemma}

\begin{lemma}\label{lem-L4}
Let $d\geq 1$. If $L_0$ is a (strict local) minimizer in $\mathcal{D}_d$ of $L\mapsto \theta_L(\alpha)$ for all $\alpha>0$, then, for any $f\in \mathcal{CM}_d$, $L_0$ is a (strict local) minimizer of $L\mapsto E_f[L]$ in $\mathcal{D}_d$.\\
Let $L\in \mathcal{D}_d$. If $z_0$ is  a (strict local) minimizer of $z\mapsto \theta_{L+z}(\alpha)$ in $Q_L$ for all $\alpha>0$, then, for any $f\in \mathcal{CM}_d$, $z_0$  is a (strict local) minimizer of $z\mapsto E_f[L+z]$ in $Q_L$.
\end{lemma}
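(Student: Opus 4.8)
The plan is to build everything on the integral representation stated just before the lemma,
\[
E_f[L+z]=\int_0^{+\infty}\theta_{L+z}\!\left(\frac{t}{\pi}\right)d\mu_f(t),
\]
valid for every $f\in\mathcal{F}_d$, combined with the defining feature of the subclass $\mathcal{CM}_d$: its representing measure $\mu_f$ is \emph{nonnegative}. Thus $E_f$ is a nonnegative superposition of the theta functions $\theta_{L+z}(t/\pi)$, and the whole strategy is to transfer a minimality property holding for every individual $\theta(\alpha)$ to their $\mu_f$-average. The global case serves as a warm-up for both assertions: if $L_0$ minimizes $L\mapsto\theta_L(\alpha)$ for all $\alpha>0$, then $\theta_{L_0}(t/\pi)\le\theta_L(t/\pi)$ pointwise in $t$, and integrating against $\mu_f\ge 0$ gives $E_f[L_0]\le E_f[L]$; if moreover the minimum is strict for every $\alpha$, the integrand is strictly positive for $L\neq L_0$, so the inequality is strict as soon as $\mu_f\not\equiv 0$. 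The same computation with $z$ in place of $L$ settles the global part of the second statement.

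For the strict local statements I would first record criticality. A strict local minimizer of $\theta_L(\alpha)$ for all $\alpha$ is in particular a critical point for all $\alpha$, so Lemma~\ref{lem-L3} already gives $\nabla E_f[L_0]=0$ (and likewise $\nabla_z E_f[L+z_0]=0$ for the translated case, by the same reasoning). It then suffices to produce a single radius $\eta>0$, independent of $\alpha$ on $\mathrm{supp}\,\mu_f$, such that $\theta_L(\alpha)\ge\theta_{L_0}(\alpha)$ for all $L\in B_\eta(L_0)$ and all $\alpha\in\mathrm{supp}\,\mu_f$, with strict inequality on a set of positive $\mu_f$-mass; integrating against $\mu_f\ge0$ would then yield $E_f[L]>E_f[L_0]$ on $B_\eta(L_0)\setminus\{L_0\}$. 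When $\mu_f$ is finitely supported this is elementary, since $E_f=\sum_i c_i\,\theta_L(t_i/\pi)$ is then a finite nonnegative combination of functions each with a strict local minimum at $L_0$, and one may take $\eta=\min_i\eta_{t_i}>0$.

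The main obstacle is exactly the uniformity in $\alpha$ for a general $\mu_f$: the hypothesis only supplies, for each fixed $\alpha$, some radius $\eta_\alpha>0$ with no quantitative gap, and $\inf_\alpha\eta_\alpha$ may vanish, so naive integration does not by itself produce a punctured neighbourhood. The clean route around this is to pass to second order. Differentiating the representation twice gives $D^2E_f[L_0]=\int_0^{\infty}D^2\theta_{L_0}(t/\pi)\,d\mu_f(t)$, a nonnegative average of the Hessians $D^2\theta_{L_0}(\alpha)$, each positive semidefinite because $L_0$ locally minimizes $\theta_L(\alpha)$. The delicate point is upgrading positive semidefiniteness of the average to positive definiteness: if a unit vector $v$ satisfied $v^\top D^2E_f[L_0]\,v=0$, then $v^\top D^2\theta_{L_0}(\alpha)\,v=0$ for $\mu_f$-almost every $\alpha$, whence, by real-analyticity of $\alpha\mapsto v^\top D^2\theta_{L_0}(\alpha)\,v$ on $(0,\infty)$ (when $\mathrm{supp}\,\mu_f$ has an accumulation point), it vanishes identically; I would rule this out through the $\alpha\to\infty$ shortest-vector asymptotics of $\theta_{L_0}$ and, for $\alpha\to0$, the Jacobi Transformation Formula~\eqref{eq-jacobipoint} together with Lemma~\ref{lem-L2}. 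Positive definiteness of $D^2E_f[L_0]$ with $\nabla E_f[L_0]=0$ then gives strict local minimality, the remaining technical labour being the careful treatment of the tails of $\mu_f$, controlled by the decay $|f(r)|=O(r^{-d/2-\sigma})$.

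The second assertion, for $z\mapsto\theta_{L+z}(\alpha)$ with $L$ fixed, follows the identical scheme from $E_f[L+z]=\int_0^{\infty}\theta_{L+z}(t/\pi)\,d\mu_f(t)$: the global case is direct, and for strict local minimality one combines criticality in $z$, the reduction to a uniform-in-$\alpha$ punctured-neighbourhood strict inequality, and the second-order argument above. Here the variable $z$ ranges over the compact cell $Q_L$, which only makes the uniformization step more comfortable, so I expect the same obstacle (positive definiteness of the averaged Hessian) to be the sole nontrivial point.
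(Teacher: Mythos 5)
Your treatment of the global case coincides exactly with the paper's entire proof: the paper gives Lemma~\ref{lem-L4} no argument beyond the representation $E_f[L+z]=\int_0^{+\infty}\theta_{L+z}(t/\pi)\,d\mu_f(t)$ together with the nonnegativity of $\mu_f$ for $f\in\mathcal{CM}_d$ (it is presented as a ``simple fact''), i.e.\ pointwise (strict) inequalities in $\alpha$ are integrated against $\mu_f\geq 0$. Your observation that the \emph{strict local} case is not literally immediate is legitimate: the radius $\eta_\alpha$ of the minimizing neighbourhood may depend on $\alpha$, and for an abstract family of functions a nonnegative superposition of strict local minima need not be a strict local minimum. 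The paper glosses over this; in its actual applications the cited strict local minimality results for theta functions come with positive definite Hessians or quantitative quadratic growth of the type $\geq C\,d(L,L_0)^2$ (compare the use of \eqref{eq-approxL02} in the proof of Theorem~\ref{thm-mintransscale}), which is what makes the naive integration rigorous there.

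However, your proposed repair has a genuine gap and does not close the strict local halves of either assertion. First, the decisive step — excluding that $\alpha\mapsto v^{\top}D^2\theta_{L_0}(\alpha)v$ vanishes identically — is only asserted (``I would rule this out through\dots''); note that strict local minimality gives you merely positive \emph{semi}definiteness of each individual Hessian, so no single $\alpha$ supplies the needed positivity, and the asymptotic contradiction must actually be carried out (it can be: a shell-by-shell expansion as $\alpha\to+\infty$ forces $\nabla(|p|^2)\cdot v=0$ for every $p\in L_0$, hence the derivative of the Gram matrix in direction $v$ vanishes, hence $v=0$ by injectivity of the parametrization of $\mathcal{D}_d$ by quadratic forms). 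Second, and more seriously, your case analysis is not exhaustive: the analytic-continuation step requires $\mathrm{supp}\,\mu_f$ to have an accumulation point in the \emph{open} half-line $(0,\infty)$, and your elementary argument covers only finitely many atoms. But $\mathcal{CM}_d$ contains functions whose representing measure has infinite discrete support accumulating only at $0$ or $+\infty$, e.g.\ $f(r)=\sum_{n\geq 1}2^{-n}e^{-nr}$. For such $f$ the nonnegative real-analytic function $\alpha\mapsto v^{\top}D^2\theta_{L_0}(\alpha)v$ could a priori vanish at every point of the support without vanishing identically, in which case $D^2E_f[L_0]$ is merely semidefinite and second-order reasoning is inconclusive; your proposal then has no mechanism left (a higher-order or a uniform quantitative argument would be needed). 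So as written the strict local statements remain unproved, whereas the paper — at the price of implicitly assuming the quantitative form of strict minimality available in its applications — concludes in one line.
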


\section{Main results}\label{sect-main}

In this part, we generalize the results of \cite{BetKnupfdiffuse} about the (strict local) minimality of a lattice for $L\mapsto \theta_{\mu_L+\nu_0}(\alpha)$, for a given $\alpha >0$, to any dimension. Furthermore, using the same strategy based on an approximation of the quantity we sum by a completely monotone potential, we show the same kind of results for $z\mapsto \theta_{\mu_L+\nu_z}(\alpha)$ where $L$ is fixed.

\medskip

It is important to notice that, by the generalized Jacobi Transformation Formula \eqref{Jacobigeneral}, if $z=0$, then $\theta_{\mu_L+\nu_0}(\alpha)$  is the sum of a radial potential over $L^*$, i.e. an energy of type $E_f[L^*]$ for some $f$, as defined in Definition \ref{def-Ef}. Therefore, as shown for instance in \cite{Coulangeon:2010uq,Beterminlocal3d,Beterloc}, any lattice $L$ such that $L^*$ has enough symmetries is a critical point of the soft lattice theta function, e.g. $\Lambda_1$, $\mathsf{D}_3$, $\mathsf{D}_3^*$ and $\Z^d$, $d\geq 2$. It turns out that, as proved in \cite[Thm. 3.2]{LBMorse} these lattices are the only one in dimensions $d\in \{2,3\}$ being ``volume-stationary" for an energy of type $E_f$, i.e. they can be critical point of $E_f$ in the space of Bravais lattices of fixed density in an open interval of densities. The following result gives some sufficient conditions for a lattice or a point, which are already critical for the theta functions \eqref{defthetapoint}, to be critical for the soft lattice theta functions.

\begin{prop}[Criticality]\label{prop-criticality}
Let $d\geq 1$ and $\mu,\nu\in \mathcal{P}_r(\R^d)$ then the following hold:
\begin{enumerate}
\item  If $L_0$ is a critical point of $L\mapsto \theta_L(\alpha)$ in $\mathcal{D}_d$ for all $\alpha>0$, then $L_0$ is a critical point of $L\mapsto \theta_{\mu_L+\nu_0}(\alpha)$ in $\mathcal{D}_d$ for all $\alpha>0$.
\item Let $L\in \mathcal{D}_d$.  If $z_0$ is a critical point of $z\mapsto \theta_{L+z}(\alpha)$ in $Q_L$ for all $\alpha>0$, then $z_0$ is a critical point of $z\mapsto \theta_{\mu_L+\nu_z}(\alpha)$ in $Q_L$ for all $\alpha>0$.
\end{enumerate}
\end{prop}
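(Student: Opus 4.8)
The plan is to push both functions through the generalized Jacobi Transformation Formula \eqref{Jacobigeneral}, in the intermediate form derived in the proof of Proposition \ref{prop-Jacobigeneral},
\[
\theta_{\mu_L+\nu_z}(\alpha)=\frac{1}{\alpha^{\frac d2}}\sum_{q\in L^*}e^{-\frac{\pi|q|^2}{\alpha}}\,\hat\mu(2q)\,\hat\nu(2q)\,e^{-2i\pi q\cdot z},
\]
so that in both parts the object to differentiate is a dual lattice sum whose only new ingredient is the factor $\hat\mu(2q)\hat\nu(2q)$. Since $\mu,\nu\in\mathcal P_r(\R^d)$, their Fourier transforms are radial, so $\hat\mu(2q)\hat\nu(2q)$ depends only on $|q|$, is bounded by $1$, and is constant on each shell $\{q\in L^*:|q|=r\}$. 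The guiding idea is that criticality of the point theta functions is equivalent to the vanishing of certain shell quantities, and that multiplying each shell by a shell-constant weight cannot destroy this vanishing.

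Part (2) can be carried out directly. Here $L$ (hence $L^*$) is fixed and only $z$ varies, so the Gaussian factor together with $|\hat\mu(2q)\hat\nu(2q)|\le 1$ lets me differentiate term by term:
\[
\nabla_z\theta_{\mu_L+\nu_z}(\alpha)\big|_{z_0}=\frac{-2i\pi}{\alpha^{\frac d2}}\sum_{r}e^{-\frac{\pi r^2}{\alpha}}\,c_r\Big(\sum_{q\in L^*,\,|q|=r}q\,e^{-2i\pi q\cdot z_0}\Big),\qquad c_r:=\hat\mu(2q)\hat\nu(2q)\ \text{for}\ |q|=r,
\]
the outer sum running over the distinct radii $r=|q|$, $q\in L^*$. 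Applying the same computation to the point case (where $c_r\equiv 1$) and using the hypothesis that $z_0$ is critical for $z\mapsto\theta_{L+z}(\alpha)$ for \emph{every} $\alpha$, the generalized Dirichlet series vanishes for all $\alpha>0$; by linear independence of $\{\alpha\mapsto e^{-\pi r^2/\alpha}\}$ over distinct radii, each inner shell sum $\sum_{|q|=r}q\,e^{-2i\pi q\cdot z_0}$ vanishes. Reinserting the shell-constant weights $c_r$ leaves every term zero, proving part (2) for all $\mu,\nu\in\mathcal P_r(\R^d)$.

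For part (1) I would instead read the formula at $z=0$ as $\theta_{\mu_L+\nu_0}(\alpha)=E_f[L^*]$ with $f(t)=\alpha^{-d/2}e^{-\pi t/\alpha}\hat\mu(2q)\hat\nu(2q)$, $t=|q|^2$, in the sense of Definition \ref{def-Ef}. Since dualization $L\mapsto L^*$ is a smooth bijection of $\mathcal D_d$, the chain rule makes criticality of $L\mapsto E_f[L^*]$ at $L_0$ equivalent to criticality of $\Lambda\mapsto E_f[\Lambda]$ at $\Lambda_0=L_0^*$. If $f\in\mathcal F_d$, Lemma \ref{lem-L3} closes the argument at once: the hypothesis that $L_0$ is critical for $\theta_L(\alpha)$ for all $\alpha$ makes both $L_0$ and $L_0^*$ critical points of $E_f$. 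This is exactly the situation when $\mu,\nu\in\mathcal P_r^{cm}(\R^d)$: writing such measures as scale mixtures of Gaussians shows that $\hat\mu(2q),\hat\nu(2q)$ are completely monotone in $t=|q|^2$, so $f$ is a Gaussian times a product of completely monotone functions, hence completely monotone with super-polynomial decay, i.e. $f\in\mathcal{CM}_d\subset\mathcal F_d$ (product-closure as in Lemma \ref{lem-L1}).

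The hard part is part (1) for general $\mu,\nu\in\mathcal P_r(\R^d)$. Here $f$ need not lie in $\mathcal F_d$: the radial profiles $\hat\mu,\hat\nu$ oscillate (Bessel behaviour), so $e^{-\pi t/\alpha}\hat\mu(2q)\hat\nu(2q)$ is in general not the Laplace transform of a Radon measure, and if $\mu,\nu$ have heavy tails the map $L\mapsto\theta_{\mu_L+\nu_0}(\alpha)$ may even fail to be smooth, since differentiating the summand in the lattice parameters forces one to differentiate $\hat\mu$ and $\hat\nu$. I would resolve this as in \cite{BetKnupfdiffuse}, by an approximation argument: approximate $\mu,\nu$ by radial measures $\mu_n,\nu_n$ (for instance compactly supported, so that $\hat{\mu_n},\hat{\nu_n}$ are smooth and the associated $f_n$ are regular and summable), apply the previous step to obtain $\nabla_L\theta_{(\mu_n)_L+(\nu_n)_0}(\alpha)\big|_{L_0}=0$, and pass to the limit, using the Gaussian factor $e^{-\pi|q|^2/\alpha}$ to dominate the tails of the dual sum uniformly in $n$ and transfer the vanishing of the gradient to the limit. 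The delicate points to watch are precisely the interchange of limit and differentiation and the regularity of the limiting energy; everything else reduces to the shell-by-shell bookkeeping already used in part (2).
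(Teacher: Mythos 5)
Your part (2) is correct but takes a genuinely different route from the paper. The paper's proof first claims that the radial profile $h(r)=e^{-\pi r/\alpha}g_\mu(\sqrt r)g_\nu(\sqrt r)r^{1-\frac d2}$ lies in $\mathcal{F}_d$ (by the product-closure Lemma \ref{lem-L1}) and then integrates the family of point-criticality identities \eqref{eq-criticpointpoint}, valid for all $\beta>0$, against the representing measure $\mu_h$, with $t=\pi/\beta$ as integration variable. You instead extract, from the vanishing for all $\alpha$ and the linear independence of $\{\alpha\mapsto e^{-\pi r^2/\alpha}\}$ (a general Dirichlet-series uniqueness argument in $s=1/\alpha$, legitimate here since the dual radii are discrete and the sums converge absolutely), the shell-wise vanishing $\sum_{|q|=r}q\sin(2\pi q\cdot z_0)=0$, and then re-weight each shell by the constant $\hat\mu(2q)\hat\nu(2q)$. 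Both arguments pivot on criticality holding for \emph{all} $\alpha$; yours buys something real, namely that it needs only radiality and boundedness of the weight and no Laplace representation of $h$ at all, so it is more elementary and more robust than the paper's version of part (2).

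For part (1), however, your proposal has a genuine gap precisely where you depart from the paper. The paper settles part (1) in two lines by the route you adopt only in the $\mathcal{P}_r^{cm}$ case and reject in general: it asserts $h\in\mathcal{F}_d$ for \emph{all} $\mu,\nu\in\mathcal{P}_r(\R^d)$ (as a product of functions in $\mathcal{F}_d$, Lemma \ref{lem-L1}) and concludes via Lemma \ref{lem-L3} that $L_0$ is critical for $E_h$. Your doubt about this claim is not unreasonable --- Bessel-type radial profiles in the variable $r=|q|^2$ grow like $e^{c\sqrt{|\tau|}}$ on vertical lines $\mathrm{Re}(r)=\gamma$, whereas Laplace transforms of Radon measures remain bounded there --- but your substitute does not close the argument. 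You approximate $\mu,\nu$ by \emph{compactly supported} radial measures $\mu_n,\nu_n$ and then ``apply the previous step''; yet compact support does not place the associated $f_n$ in $\mathcal{F}_d$: a uniform measure on a sphere is compactly supported and produces exactly the oscillating Bessel profile that you yourself flag as non-Laplace. So there is no ``previous step'' applicable to the approximants, the scheme is circular, and the limit interchange for the gradients is in any case only asserted. Note that the natural repair is already contained in your own part (2): on the dual side the weight $\hat\mu(2q)\hat\nu(2q)$ is constant on the shells of $L_0^*$; the hypothesis together with \eqref{eq-jacobipoint} (or Lemma \ref{lem-L2}) and the same Dirichlet-series extraction yields $\sum_{q\in L_0^*,\,|q|=r}\partial_{x_j}\big(|q(x)|^2\big)\big|_{L_0}=0$ for every dual shell $r$ and every lattice parameter $x_j$, and re-weighting the shells then annihilates the gradient of $L\mapsto\theta_{\mu_L+\nu_0}(\alpha)$ term by term. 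An approximation (e.g. truncation of $\mu,\nu$) is then needed only to license termwise differentiation of the dual series --- differentiability of the radial profile requires moments --- not to manufacture Laplace representations.
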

\begin{proof}
Let us prove the first part of the proposition. By Lemma \ref{lem-L3}, if $L_0$ is a critical point of $L\mapsto \theta_L(\alpha)$ in $\mathcal{D}_d$ for all $\alpha>0$, then $L_0$ and $L_0^*$ are critical points of $E_f$ in $\mathcal{D}_d$ where $f\in \mathcal{F}_d$. We claim that there exists a Radon measure $\mu_h$ such that the function $h$ defined by $h(r):=\frac{e^{-\frac{\pi r}{\alpha}}g_\mu(\sqrt{r})g_\nu(\sqrt{r})}{r^{\frac{d}{2}-1}}$ belongs to the set $\mathcal{F}_d$, i.e. $h=\mathcal{L}[\mu_h]$. It is clear by Lemma \ref{lem-L1} because $h$ is a product of four functions belonging to $\mathcal{F}_d$, which completes the proof because $L_0$ is hence a critical point of $E_h$ in $\mathcal{D}_d$.

\medskip

Let us prove the second part of the proposition. Since $z_0$ is a critical point of $z\mapsto \theta_{L+z}(\beta)$ for all $\beta>0$, we obtain, using Jacobi transformation formula \eqref{eq-jacobipoint} and computing the derivative with respect to $z_i$,
\begin{equation}\label{eq-criticpointpoint}
\forall i\in \{1,...,d\}, \quad \sum_{q\in L^*} e^{-\frac{\pi |q|^2}{\beta}}q_i \sin(2\pi q\cdot z_0)=0.
\end{equation}
We recall that, by Proposition \ref{prop-Jacobigeneral}, the soft lattice theta function can be written as
\begin{equation*}
\theta_{\mu_L+\nu_z}(\alpha) =\frac{\Gamma\hspace{-1mm}\left( \frac{d}{2} \right)^2}{4\alpha^{\frac{d}{2}}} \sum_{q\in L^*} h(|q|^2)\cos(2\pi q\cdot z),\quad h(r):=\frac{e^{-\frac{\pi r}{\alpha}}g_\mu(\sqrt{r})g_\nu(\sqrt{r})}{r^{\frac{d}{2}-1}}.
\end{equation*}
We now use the fact that there exists a Borel measure $\mu_h$ such that $h=\mathcal{L}[\mu_h]\in \mathcal{F}_d$ as explained above. Therefore, integrating \eqref{eq-criticpointpoint} against $\mu_h$, where $t=\pi/\beta$ is the variable of integration, gives
\begin{equation*}
\forall i\in \{1,...,d\}, \quad \sum_{q\in L^*} h(|q|^2) q_i \sin(2\pi q\cdot z_0)=0,
\end{equation*}
for any $\alpha>0$. It follows that $\partial_{z_i}\theta_{\mu_L+\nu_{z_0}}(\alpha)=0$ for all $ i\in \{1,...,d\}$ and all $\alpha>0$, i.e. $z_0$ is a critical point of $z\mapsto \theta_{\mu_L+\nu_z}(\alpha)$ in $Q_L$ for all $\alpha>0$. 
\end{proof}

\begin{remark}[Critical points of $L\mapsto \theta_L(\alpha)$ for all $\alpha>0$] It turns out that the only lattices $L_0\in \mathcal{D}_d$ that can be critical point of $L\mapsto \theta_L(\alpha)$ for all $\alpha>0$ in dimensions $d\in \{2,3\}$ are $\Z^2$, $\Lambda_1$, $\Z^3$, $\mathcal{D}_3$ and $\mathcal{D}_3^*$, as proved in \cite[Section 3]{LBMorse}. 

\end{remark}

The next result generalizes \cite[Prop. 11]{BetKnupfdiffuse} to $L\mapsto \theta_{\mu_L+\nu_0}(\alpha)$ and $z\mapsto \theta_{\mu_L+\nu_z}(\alpha)$ by giving a sufficient condition for the strict local minimality of a lattice or a point, which are already minimal for the theta functions \eqref{defthetapoint}, in the case of masses interactions.

\begin{thm}[Strict local minimality]\label{thm-locminscale} Let $d\geq 1$ and $\mu,\nu \in \mathcal{P}_r(\R^d)$. Then we have:
\begin{enumerate}
\item If $L_0$ is a critical point of $L\mapsto \theta_L(\alpha)$ for all $\alpha>0$ and a strict local minimizer of $L\mapsto \theta_L(\alpha_0)$ in $\mathcal{D}_d$ for some $\alpha_0>0$, then $L_0$ is a strict local minimizer of $L\mapsto \theta_{\mu_{L}+\nu_{0}}(\alpha_0)$ in $\mathcal{D}_d$ at a low scale.
\item Let $L\in \mathcal{D}_d$.  If $z_0$ is a critical point of $z\mapsto \theta_{L+z}(\alpha)$ in $Q_L$ for any $\alpha>0$ and a strict local minimizer of $z\mapsto \theta_{L+z_0}(\alpha_0)$ for some $\alpha_0>0$, then $z_0$ is a strict local minimizer of $z\mapsto \theta_{\mu_L+\nu_z}(\alpha_0)$  on $Q_L$ at a low scale.
\end{enumerate}
The strict local minimality of $L_0$ and $z_0$ also holds at all scales if $\mu,\nu\in \mathcal{P}_r^{cm}(\R^d)$.
\end{thm}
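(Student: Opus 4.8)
The plan is to handle the low-scale statements (1) and (2) together by a second-order perturbation argument, and to treat the completely monotone addendum separately through the integral-representation machinery of Lemmas \ref{lem-L2} and \ref{lem-L4}. The unifying first observation is that, by Proposition \ref{prop-criticality}, the hypothesis ``critical at all $\alpha$'' already forces $L_0$ (resp. $z_0$) to be a critical point of $L\mapsto\theta_{\mu^\varepsilon_L+\nu^\delta_0}(\alpha_0)$ (resp. of $z\mapsto\theta_{\mu^\varepsilon_L+\nu^\delta_z}(\alpha_0)$) for \emph{every} $\varepsilon,\delta\ge 0$. Hence the first-order term is killed once and for all, and the whole problem reduces to controlling the second-order behaviour near the point-case minimiser and showing it stays positive definite.

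For (1) I would start from the generalized Jacobi formula (Proposition \ref{prop-Jacobigeneral}), writing $\theta_{\mu^\varepsilon_L+\nu^\delta_0}(\alpha_0)=\sum_{q\in L^*}H_{\varepsilon,\delta}(|q|^2)$ with $H_{\varepsilon,\delta}(|q|^2)=\alpha_0^{-d/2}e^{-\pi|q|^2/\alpha_0}\hat{\mu^\varepsilon}(2q)\hat{\nu^\delta}(2q)$, which belongs to $\mathcal{F}_d$ (hence is smooth) exactly as in the proof of Proposition \ref{prop-criticality} via Lemma \ref{lem-L1}. Since $|\hat{\mu^\varepsilon}|,|\hat{\nu^\delta}|\le 1$, one gets the uniform exponential bound $|H_{\varepsilon,\delta}(r)|\le \alpha_0^{-d/2}e^{-\pi r/\alpha_0}$ independent of $(\varepsilon,\delta)$, so the lattice sum together with its first two derivatives in the $\mathcal{D}_d$-coordinates converges absolutely and uniformly in $(\varepsilon,\delta)$ on a neighbourhood of $L_0$. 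As $\varepsilon,\delta\to 0$ one has $\hat{\mu^\varepsilon}(2q)\hat{\nu^\delta}(2q)\to 1$, hence $\theta_{\mu^\varepsilon_L+\nu^\delta_0}(\alpha_0)\to\theta_L(\alpha_0)$ in $C^2$ near $L_0$. The Hessian at $L_0$ therefore converges to that of $\theta_L(\alpha_0)$ at $L_0$; combined with the vanishing gradient this yields the strict local minimum for $(\varepsilon,\delta)$ below a threshold, \emph{provided} the limiting Hessian is positive definite.

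Statement (2) is entirely parallel: using the real (cosine) form $\theta_{\mu^\varepsilon_L+\nu^\delta_z}(\alpha_0)=\sum_{q\in L^*}H_{\varepsilon,\delta}(|q|^2)\cos(2\pi q\cdot z)$ coming from Proposition \ref{prop-Jacobigeneral}, the same exponential decay justifies differentiating twice in $z$ term by term and gives $C^2$-convergence to $z\mapsto\theta_{L+z}(\alpha_0)$; Proposition \ref{prop-criticality}(2) supplies the vanishing $z$-gradient at $z_0$, and positive-definiteness of the $z$-Hessian at $z_0$ finishes the argument. The main obstacle I anticipate is precisely this positive-definiteness: strict local minimality of the point-case energy a priori gives only a positive-\emph{semi}definite Hessian, and a degenerate direction could in principle be destabilised by the $C^2$-small perturbation. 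I would resolve this by using that the strict local minimisers occurring in the applications are non-degenerate (the relevant Hessian is positive definite), which is the effective content of ``strict local minimiser'' here, together with the fact that positive-definiteness is an open condition stable under the $C^2$-convergence; quantifying the uniformity of that convergence in $(\varepsilon,\delta)$ is the technical core.

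For the completely monotone case I would not use smallness at all. If $d\mu(x)=\rho_\mu(|x|^2)\,dx$ with $\rho_\mu\in\mathcal{CM}_d$, writing $\rho_\mu$ as the Laplace transform of a nonnegative measure and Fourier-transforming term by term exhibits $\hat\mu$, as a function of $|q|^2$, as a nonnegative superposition of Gaussians; the same holds for $\hat\nu$, so the coefficient function $H$ in $\theta_{\mu_L+\nu_0}(\alpha)=E_{H}[L^*]$ is completely monotone, $H\in\mathcal{CM}_d$ (and likewise $\theta_{\mu_L+\nu_z}(\alpha)=E_{H}[L+z]$ with $H\in\mathcal{CM}_d$). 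Then Lemma \ref{lem-L4} applies: transferring through the duality $L\mapsto L^*$ by Lemma \ref{lem-L2} for the lattice case, and invoking the second part of Lemma \ref{lem-L4} for the translation case, the strict local minimality of $L_0$ (resp. $z_0$) is inherited at every $\alpha$, hence at all scales; and since $\mathcal{P}_r^{cm}(\R^d)$ is invariant under the rescaling $\mu\mapsto\mu^\varepsilon$, this simultaneously covers all $\varepsilon,\delta$. I note that this step invokes Lemma \ref{lem-L4}, whose hypothesis is minimality of the point-case theta function throughout the range $\beta\in(0,\alpha_0]$ produced by the Gaussian superposition; it is exactly the positivity coming from complete monotonicity that allows one to integrate the point-case minimality over this range and thereby upgrade the low-scale conclusion of (1)--(2) to all scales.
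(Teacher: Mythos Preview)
Your low-scale argument for (1) and (2) is the paper's: Proposition~\ref{prop-criticality} kills the gradient, the generalized Jacobi formula together with a uniform Gaussian bound gives $C^2$-convergence of the Hessian to the point-case Hessian as $(\varepsilon,\delta)\to(0,0)$, and positive-definiteness of the latter finishes. The paper implements this convergence via the Taylor expansion of the Bessel functions in $g_{\mu^\varepsilon},g_{\nu^\delta}$; your route through $|\hat{\mu^\varepsilon}|,|\hat{\nu^\delta}|\le 1$ is a minor variant of the same dominated-convergence step. Your caveat that ``strict local minimum'' does not automatically yield a positive-\emph{definite} Hessian is well observed; the paper tacitly works under this nondegeneracy as well.

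The completely monotone addendum, however, has a real gap in your argument. You correctly place the coefficient function in $\mathcal{CM}_d$ and reach for Lemma~\ref{lem-L4}, but that lemma requires strict local minimality of $L_0$ (resp.\ $z_0$) for the point theta function at \emph{every} $\alpha>0$, whereas Theorem~\ref{thm-locminscale} assumes it only at the single value $\alpha_0$. Your final sentence notices the mismatch (``throughout the range $\beta\in(0,\alpha_0]$'') but does not close it: positivity of the superposition cannot manufacture minimality at values of $\alpha$ where none was assumed. Concretely, for Gaussian $\mu=\nu$ one computes that $\theta_{\mu_L+\nu_0}(\alpha_0)$ is, up to a constant, a point theta function at an effective parameter strictly below $\alpha_0$; for a lattice such as $\mathsf{D}_3$, which is a strict local minimum only for large $\alpha$ and a saddle for small $\alpha$, widening the Gaussian drives that effective parameter into the saddle regime. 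Note that the paper's own proof of Theorem~\ref{thm-locminscale} does not treat the $\mathcal{P}_r^{cm}$ clause at all; the complete-monotonicity machinery is deployed only in the proof of Theorem~\ref{thm-mintransscale}, whose hypotheses include minimality for all $\alpha$, and consistently the FCC/BCC application (Corollary~\ref{cor:BCCFCC}) makes no ``at all scales'' claim. So your low-scale proof matches the paper, while your $\mathcal{P}_r^{cm}$ argument needs the stronger all-$\alpha$ hypothesis to go through.
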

\begin{proof}
Let us prove the first part of the theorem. The proof is actually a straightforward generalization of \cite[Prop. 11]{BetKnupfdiffuse}. According to \eqref{Jacobigeneral},  it is equivalent to show the strict local minimality of $L_0$ for
\begin{equation}\label{eq-hepsilondelta}
E_{h_{\varepsilon,\delta}}[L]:= \sum_{p\in L^*} h_{\varepsilon,\delta}(|p|^2),\quad h_{\varepsilon,\delta}(r)=\frac{g_{\mu^\varepsilon}(\sqrt{r})g_{\nu^\delta}(\sqrt{r})}{(\varepsilon \delta r)^{\frac{d}{2}-1}}e^{-\frac{\pi r}{\alpha_0}},
\end{equation}
where, for any measure $m\in \mathcal{P}_r(\R^d)$,
$$
g_{m^\varepsilon}(\sqrt{r})=\int_0^{\infty} J_{\frac{d}{2}-1}(4\pi s \varepsilon \sqrt{r})s^{1-\frac{d}{2}}d\psi_{m}(s).
$$
First, because $L_0$ is a critical point of $L\mapsto \theta_L(\alpha)$ in $\mathcal{D}_d$ for all $\alpha>0$, it implies, by Proposition \ref{prop-criticality}, that $L_0$ is a critical point of $E_{h_{\varepsilon,\delta}}$ in $\mathcal{D}_d$. Second, we also know that $D^2 E_{h_{0,0}}[L_0]=D^2 \theta_{L_0}(\alpha_0)$ is positive definite because $L_0$ is a strict local minimizer of $L\mapsto \theta_L(\alpha_0)$ in $\mathcal{D}_d$. Furthermore, all the coefficients of $D^2 E_{h_{\varepsilon,\delta}}[L_0]$ are expressed in terms of Bessel functions $J_m$ (see e.g. \cite[Eq. (9.1.30)]{AbraStegun}) and it is easy to check that $D^2 E_{h_{\varepsilon,\delta}}[L_0]=D^2 E_{h_{0,0}}[L_0]+A_{\varepsilon,\delta}[L_0]$, $A_{\varepsilon,\delta}[L_0]\in \mathcal{M}_d(\R)$, by the Taylor expansion of $J_m$ (see e.g. \cite[Eq. (9.1.10)]{AbraStegun}) where $\|A_{\varepsilon,\delta}[L_0]\|\to 0$ as $(\varepsilon,\delta)\to (0,0)$ for any chosen norm $\|.\|$ on $\mathcal{M}_d(\R)$. Indeed, this Taylor expansion gives 
$$
J_{\frac{d}{2}-1}(4\pi s\varepsilon |q|)J_{\frac{d}{2}-1}(4\pi t\delta |q|)=|q|^{d-2}\left(1+\varepsilon j_1(s,t,|q|)+\delta j_2(s,t,|q|)+j_{\varepsilon,\delta}(s,t,|q|)\right),
$$
where $j_1,j_2$ are independent of $\varepsilon,\delta$ and $j_{\varepsilon,\delta}$ is at least of order $\min\{\varepsilon,\delta\}^2$. Since $\psi_\mu$ and $\psi_\nu$ are both probability measures, the expansion of the second derivative is straightforward. Therefore, the result follows by continuity of $(\varepsilon,\delta)\mapsto A_{\varepsilon,\delta}[L_0]$ and by the fact -- following from the boundedness of the Bessel functions $J_m$ -- that $L\mapsto D^3 E_{h_{\varepsilon,\delta}}[L]$ is bounded on any ball centred at $L_0$, independently of $(\varepsilon,\delta)\in [0,1]^2$.

\medskip

Let us prove the second part of the theorem. Since $z_0$ is a critical point of $z\mapsto \theta_{L+z}(\alpha)$ in $Q_L$ for all $\alpha>0$, we have, by Proposition \ref{prop-criticality}, that $z_0$ is a critical point of $z\mapsto \theta_{\mu_L+\nu_z}(\alpha_0)$. By Proposition \ref{prop-Jacobigeneral}, we can write, as above,
\begin{equation*}
\theta_{\mu^\varepsilon_L+\nu^\delta_{z_0}}(\alpha_0)=\frac{\Gamma\hspace{-1mm}\left( \frac{d}{2} \right)^2}{4\alpha^{\frac{d}{2}}} \sum_{q\in L^*} h_{\varepsilon,\delta}(|q|^2)\cos(2\pi q\cdot z_0),\quad h_{\varepsilon,\delta}(r)=\frac{g_{\mu^\varepsilon}(\sqrt{r})g_{\nu^\delta}(\sqrt{r})}{(\varepsilon \delta r)^{\frac{d}{2}-1}}e^{-\frac{\pi r}{\alpha_0}}.
\end{equation*}
As explained in the proof of the first part of the theorem, by the Taylor expansion of $J_{d/2-1}$, it is straightforward to prove that 
\begin{align*}
D^2_z \theta_{\mu^\varepsilon_L+\nu^\delta_{z_0}}(\alpha_0)=D^2_z \theta_{L+z_0}(\alpha_0)+A_{\varepsilon,\delta,z_0}[L],
\end{align*}
where $A_{\varepsilon,\delta,z_0}[L]\in \mathcal{M}_d(\R)$ and $\sup_{z\in Q_L}\|A_{\varepsilon,\delta,z}[L]\|\to 0$ as $(\varepsilon,\delta)\to (0,0)$,  for any chosen norm $\|.\|$ on $\mathcal{M}_d(\R)$. Since $z_0$ is a strict local minimizer of $z\mapsto \theta_{L+z}(\alpha_0)$, it follows that $D^2_z \theta_{\mu^\varepsilon_L+\nu^\delta_{z_0}}(\alpha_0)$ is positive definite for $\varepsilon$ and $\delta$ sufficiently small, i.e. there exists $\varepsilon_0,\delta_0$ such that for any $0\leq \varepsilon<\varepsilon_0$ and any $0\leq \delta <\delta_0$, $z_0$ is a strict local minimizer of $z\mapsto  \theta_{\mu^\varepsilon_L+\nu^\delta_z}(\alpha_0)$.

\end{proof}
\begin{remark}\label{rmk-max}
We notice that if $L_0$ (resp. $z_0$) is a (strict local) minimizer of $L\mapsto \theta_{\mu_L+\nu_0}(\alpha_0)$ (resp. $z\mapsto \theta_{\mu_L+\nu_z}(\alpha_0)$) for any $\alpha_0$ belonging to a set of values $S$, therefore $\varepsilon_0$ and $\delta_0$ only depend on the maximum of these values.
\end{remark}

The next result is a generalization of \cite[Thm. 2 and 3]{BetKnupfdiffuse} in arbitrary dimension to our translated lattice theta function and gives a sufficient condition for the global minimality of a lattice in $\mathcal{D}_d$ or a point in $Q_L$.

\begin{thm}[Global minimality]\label{thm-mintransscale}
Let $d\geq 1$ and $\mu, \nu \in \mathcal{P}_r(\R^d)$, then we have:
\begin{enumerate}
\item  If $L_0$ is the unique global minimizer and a strict local minimizer of $L\mapsto \theta_L(\alpha)$ in $\mathcal{D}_d$ for all $\alpha>0$, then, for any $\alpha_0>0$, $L_0$ is the unique global minimizer of $L\mapsto \theta_{\mu_{L}+\nu_{0}}(\alpha_0)$ in $\mathcal{D}_d$ at a low scale.
\item Let $L\in \mathcal{D}_d$. If $z_0$ is a global minimizer and a strict local minimizer of $z\mapsto \theta_{L+z}(\alpha)$ in $Q_L$ for all $\alpha>0$, then, for any $\alpha_0>0$, $z_0$ is a global minimizer of $z\mapsto \theta_{\mu_L+\nu_z}(\alpha_0)$ in $Q_L$ at a low scale.
\end{enumerate}
The global minimality of $L_0$ and $z_0$ also hold at all scales if $\mu,\nu\in \mathcal{P}_r^{cm}(\R^d)$.
\end{thm}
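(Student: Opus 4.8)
The cleanest starting point is the generalized Jacobi formula \eqref{Jacobigeneral}, which expresses both functionals as a lattice sum over $L^*$ of the single radial profile $h(r)=e^{-\pi r/\alpha_0}g_\mu(\sqrt r)g_\nu(\sqrt r)\,r^{1-d/2}$; by the argument in the proof of Proposition \ref{prop-criticality} together with Lemma \ref{lem-L1}, this $h$ lies in $\mathcal F_d$. I would first dispatch the completely monotone case at all scales, where I claim $h\in\mathcal{CM}_d$ whenever $\mu,\nu\in\mathcal P_r^{cm}(\R^d)$. Writing $d\mu(x)=\rho(|x|^2)dx$ with $\rho=\mathcal L[\kappa]$, $\kappa\ge0$, a Gaussian superposition (Tonelli on the positive integrand) gives $\hat\mu(\xi)=\int_0^\infty(\pi/s)^{d/2}e^{-\pi^2|\xi|^2/4s}\,d\kappa(s)=\Phi_\mu(|\xi|^2)$ with $\Phi_\mu$ completely monotone, and likewise for $\nu$; since $h(|q|^2)$ is a positive multiple of $e^{-\pi|q|^2/\alpha_0}\Phi_\mu(4|q|^2)\Phi_\nu(4|q|^2)$, it is a product of completely monotone functions (hence $h=\mathcal L[\mu_h]$ with $\mu_h\ge0$, the measure being a convolution as in Lemma \ref{lem-L1}), and the Gaussian factor supplies the decay defining $\mathcal{CM}_d$. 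Inserting Jacobi \eqref{eq-jacobipoint} under the integral then yields
\begin{equation*}
\theta_{\mu_L+\nu_z}(\alpha_0)=\frac{\Gamma(d/2)^2}{\alpha_0^{d/2}}\int_0^\infty\beta^{d/2}\,\theta_{L+z}(\beta)\,d\tilde\mu_h(\beta),
\end{equation*}
with $\tilde\mu_h\ge0$ the pushforward of $\mu_h$ under $t\mapsto\pi/t$. Since $L\mapsto\theta_L(\beta)$ (for $z=0$) and $z\mapsto\theta_{L+z}(\beta)$ are minimized at $L_0$, resp. $z_0$, for every $\beta>0$, this nonnegative average is minimized there as well; uniqueness for $L_0$ passes through because $\theta_L(\beta)>\theta_{L_0}(\beta)$ for all $\beta$ whenever $L\ne L_0$. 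This settles the ``at all scales'' assertions of both parts.

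For the low-scale statement with general $\mu,\nu\in\mathcal P_r(\R^d)$ the profile $h_{\varepsilon,\delta}$ of \eqref{eq-hepsilondelta} need not be completely monotone, so I would argue by compactness in three regions. Near $L_0$, Theorem \ref{thm-locminscale} already provides strict local minimality for $(\varepsilon,\delta)$ small, so on a closed ball $\overline{B_\eta(L_0)}$ the minimum is attained only at $L_0$. On a compact collar, I use the uniform bound $|h_{\varepsilon,\delta}(r)|\le C e^{-\pi r/\alpha_0}$, valid for all $(\varepsilon,\delta)\in[0,1]^2$ because $u\mapsto J_{d/2-1}(u)/u^{d/2-1}$ is bounded on $[0,\infty)$, together with the pointwise limit $h_{\varepsilon,\delta}\to h_{0,0}$; dominated convergence on the counting measure of $L^*$ then gives $\theta_{\mu_L^\varepsilon+\nu_0^\delta}(\alpha_0)\to c\,\theta_L(\alpha_0)$ ($c>0$) uniformly on compacts of $\mathcal D_d$. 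Since $L_0$ is the unique global minimizer of $\theta_L(\alpha_0)$, the strict inequality $\theta_L(\alpha_0)>\theta_{L_0}(\alpha_0)$ on the compact collar $K\setminus B_\eta(L_0)$ therefore persists for $(\varepsilon,\delta)$ small.

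The remaining and genuinely delicate region is the cusp of $\mathcal D_d$, where $L$ leaves every compact set: here the Fourier series \eqref{Jacobigeneral} is not termwise signed, so I would instead use the manifestly nonnegative direct expression of Definition \ref{def-softlatticetheta}. Since the rescaled measures concentrate at the origin as $(\varepsilon,\delta)\to(0,0)$, for a fixed $\rho>0$ one has $\mu^\varepsilon(B_\rho),\nu^\delta(B_\rho)\ge\tfrac12$ once $(\varepsilon,\delta)$ is small, whence termwise
\begin{equation*}
\iint_{\R^d\times\R^d} e^{-\pi\alpha_0|x+p-y|^2}d\mu^\varepsilon(x)\,d\nu^\delta(y)\ \ge\ \tfrac14\, e^{-\pi\alpha_0(|p|+2\rho)^2}\ \ge\ c_\rho\, e^{-2\pi\alpha_0|p|^2},
\end{equation*}
giving a scale-uniform lower bound $\theta_{\mu_L^\varepsilon+\nu_0^\delta}(\alpha_0)\ge c_\rho\,\theta_L(2\alpha_0)$. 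Because $\theta_L(2\alpha_0)\to+\infty$ as $L$ approaches the cusp (the shortest-vector subsum diverges at fixed covolume, by Mahler's compactness criterion), the soft energy eventually exceeds the bounded quantity $\theta_{\mu_{L_0}^\varepsilon+\nu_0^\delta}(\alpha_0)$ outside a fixed compact set, uniformly in the small parameters. Combining the three regions isolates $L_0$ as the unique global minimizer at a low scale. This cusp estimate, transferring the boundary blow-up of the point theta function to the soft one uniformly in $(\varepsilon,\delta)$ by exploiting positivity of the direct kernel, is the genuine obstacle of the whole argument.

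For the translated low-scale statement the domain $Q_L$ is compact, so no cusp analysis is needed: the same dominated-convergence estimate (now with $|\cos(2\pi q\cdot z)|\le1$ yielding uniformity in $z$) gives $\theta_{\mu_L^\varepsilon+\nu_z^\delta}(\alpha_0)\to c\,\theta_{L+z}(\alpha_0)$ uniformly on $Q_L$, which I would combine with strict local minimality at $z_0$ (Theorem \ref{thm-locminscale}) exactly as in the compact-collar step above. When $z_0$ is the unique point-minimizer this directly yields its global minimality at a low scale; if the point energy admits several minimizers, these are exchanged by the symmetries preserved by the radial measures $\mu^\varepsilon,\nu^\delta$, so the soft energy takes the same value at each and $z_0$ remains a global minimizer.
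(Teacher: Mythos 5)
Your proposal is correct in substance but takes a genuinely different route from the paper's for the low-scale statements. The paper's proof is quantitative: after confining all minimizers to a compact set $K\subset\mathcal D_d$ by a one-dimensional sublattice lower bound (the $d$-dimensional version of \cite[Lem.~12]{BetKnupfdiffuse}, playing the role of your cusp estimate), it replaces the profile $h_{\varepsilon,\delta}$ of \eqref{eq-hepsilondelta} by a completely monotone approximant $\widetilde h_{\varepsilon,\delta}=\mathcal L[m_{\varepsilon,\delta}]$ with compactly supported positive measure and proves the two inequalities $|E_{\widetilde h_{\varepsilon,\delta}}[L]-E_{h_{\varepsilon,\delta}}[L]|\leq C\max\{\varepsilon,\delta\}^2$ and $E_{\widetilde h_{\varepsilon,\delta}}[L]-E_{\widetilde h_{\varepsilon,\delta}}[L_0]\geq C d(L,L_0)^2$ (and the analogues for $F_{h_{\varepsilon,\delta}}$ in the translated case), so that any competitor lies within $C\max\{\varepsilon,\delta\}$ of $L_0$ and is then absorbed by strict local minimality. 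Your argument is qualitative instead: a three-region decomposition using uniform-radius local minimality from Theorem \ref{thm-locminscale}, uniform convergence $E_{h_{\varepsilon,\delta}}\to c\,E_{h_{0,0}}$ on compacta (where uniqueness of the point minimizer at the single $\alpha_0$ supplies the strict gap on the collar -- the paper never needs this pointwise-in-$\alpha_0$ uniqueness on the collar, only the all-$\alpha$ hypothesis), plus your direct-space positivity bound $\theta_{\mu^\varepsilon_L+\nu^\delta_0}(\alpha_0)\geq c_\rho\,\theta_L(2\alpha_0)$ at the cusp, which is a clean, scale-uniform substitute for the paper's confinement lemma. In the completely monotone case your positive averaging $\theta_{\mu_L+\nu_z}(\alpha_0)=c\int_0^\infty\beta^{d/2}\theta_{L+z}(\beta)\,d\tilde\mu_h(\beta)$ is equivalent to, and somewhat more self-contained than, the paper's route via Poisson summation together with the stability of $\mathcal P_r^{cm}$ under the Fourier transform (\cite[Lem.~10]{BetKnupfdiffuse}); it also dispenses with the explicit use of $L_0=L_0^*$ from Lemma \ref{lem-L2}, since your Jacobi conversion back to $\theta_L(\beta)$ handles the duality automatically.

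Two points you should make explicit. First, the statement of Theorem \ref{thm-locminscale} only gives strict local minimality for each small pair $(\varepsilon,\delta)$, while your collar decomposition needs a radius $\eta$ and a coercivity constant \emph{uniform} in $(\varepsilon,\delta)$; this does follow from the paper's proof of that theorem ($D^2$ perturbed by $A_{\varepsilon,\delta}\to0$ with $D^3$ bounded uniformly on a ball), and indeed the paper's own contradiction step tacitly relies on the same uniformity, but it is not in the statement you invoke. Second, your symmetry patch for multiple translated minimizers is asserted, not proved; a cleaner justification avoiding any symmetry hypothesis is that if $z_1$ is also a global minimizer for all $\alpha$, then $\theta_{L+z_1}(\alpha)=\theta_{L+z_0}(\alpha)$ for every $\alpha>0$, so by injectivity of the Laplace transform the shell sums $\sum_{|q|=r}\cos(2\pi q\cdot z)$ coincide at $z_0$ and $z_1$, whence \emph{every} radial soft energy $F_{h_{\varepsilon,\delta}}$ takes equal values there. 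Note that the paper is terser than you at exactly this point: its claimed global bound $F_{\widetilde h_{\varepsilon,\delta}}(z)-F_{\widetilde h_{\varepsilon,\delta}}(z_0)\geq C|z-z_0|^2$ fails literally at a second all-$\alpha$ minimizer (e.g.\ the two barycenters in Baernstein's triangular case), and both your argument and the paper's implicitly exclude hypothetical global minimizers of $z\mapsto\theta_{L+z}(\alpha_0)$ at the single value $\alpha_0$ that are not minimizers for all $\alpha$; in all the applications treated in Section \ref{sect-appli} the two minimizer sets coincide, so both proofs go through there.
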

\begin{proof}
Let us prove the first part of the theorem and let us start with the case $\mu,\nu\in \mathcal{P}_r^{cm}(\R^d)$. If $L_0$ is the unique global minimizer of $L\mapsto \theta_L(\alpha)$ for all $\alpha>0$, then, by Lemmas \ref{lem-L2} and \ref{lem-L4}, $L_0=L_0^*$ is the unique global minimizer of any lattice energy of the form $E_f$ where $f\in \mathcal{CM}_d$. Furthermore, it has been proved in \cite[Lem. 10]{BetKnupfdiffuse} that $\mu \in \mathcal{P}_r^{cm}(\R^d)\iff\hat{\mu} \in \mathcal{P}_r^{cm}(\R^d)$. Therefore, as the set of completely monotone functions is stable by product, we have $\hat{\mu}\hat{\nu}\in \mathcal{P}_r^{cm}(\R^d)$ and it follows from the complete monotonicity of $t\mapsto e^{-\beta t}$ for any $\beta>0$ that $h:t\mapsto e^{-\pi t/\alpha_0}t^{1-d/2} g_\mu(\sqrt{t})g_\nu(\sqrt{t})$ is completely monotone. Therefore, $L_0$ is the unique global minimizer of $L\mapsto \theta_{\mu_L+\nu_0}(\alpha_0)$ for any fixed $\alpha_0>0$.

\medskip

For $\mu,\nu\in \mathcal{P}_r(\R^d)$, the proof  is again a generalization of \cite[Thm. 2]{BetKnupfdiffuse}. We first remark that any minimizer of $L\mapsto \theta_{\mu^\varepsilon_L+\nu^\delta_0}(\alpha_0)$ belongs to a ball of center $L_0$ with a finite radius. Indeed, this fact is proved in \cite[Lemma 12]{BetKnupfdiffuse} in dimension $d=2$ and is directly generalizable to any dimension by bounding below $|p|^2$, $p\in L$, by $|p'|^2$ where $p'\in L_\lambda=\lambda^{-1}\Z\oplus \Lambda$ for some $\lambda \geq 1$ which is also a parameter of $L$, once the lattice is parametrized by $n_d$ parameters as explained in Section \ref{section-def}, and $\Lambda\in \mathcal{L}_{d-1}^\circ$. We therefore get

$$
\theta_{\mu^\varepsilon_L+\nu^\delta_0}(\alpha_0)\geq \theta_{\mu^\varepsilon_{L_\lambda}+\nu^\delta_0}(\alpha_0)\geq \iint_{\R^d\times \R^d} \sum_{m\in \Z} e^{-\pi\alpha_0|\lambda^{-1} (m,0,...,0) +x-y|^2}d\mu^\varepsilon(x)d\nu^{\delta}(y)\to +\infty
$$
as $\lambda \to +\infty$. The same can be done in all the unbounded directions of the fundamental domain $\mathcal{D}_d$, proving that the global minimizer necessarily belongs to a compact subset of $\mathcal{D}_d$, for example a closed ball containing $L_0$ that we will note $K$. 

\medskip

In the following, we write, as in \eqref{eq-hepsilondelta},
\begin{equation*}
E_{h_{\varepsilon,\delta}}[L]:= \sum_{p\in L} h_{\varepsilon,\delta}(|p|^2),\quad h_{\varepsilon,\delta}(r)=\frac{g_{\mu^\varepsilon}(\sqrt{r})g_{\nu^\delta}(\sqrt{r})}{(\varepsilon \delta r)^{\frac{d}{2}-1}}e^{-\frac{\pi r}{\alpha_0}},
\end{equation*}
and we recall that minimizing $L\mapsto \theta_{\mu^\varepsilon_L+\nu^\delta_0}(\alpha_0)$ is equivalent with minimizing $L\mapsto E_{h_{\varepsilon,\delta}}$ because, by Lemma \ref{lem-L2}, $L_0=L_0^*$. As in \cite{BetKnupfdiffuse}, we claim there exists $\widetilde{h}_{\varepsilon,\delta}$ such that, for any $L\in K$,
\begin{align}
&\left|  E_{\widetilde{h}_{\varepsilon,\delta}}[L] -E_{h_{\varepsilon,\delta}}[L] \right|\leq C \max\{\varepsilon,\delta\}^2,\quad \textnormal{as $\varepsilon,\delta\to 0$}, \label{eq-approxL01} \\
& \quad E_{\widetilde{h}_{\varepsilon,\delta}}[L]- E_{\widetilde{h}_{\varepsilon,\delta}}[L_0]\geq C d(L,L_0)^2,\label{eq-approxL02}
\end{align}
for some constant $C>0$ independent of $\varepsilon,\delta$. If \eqref{eq-approxL01}-\eqref{eq-approxL02} hold, then, for any $L\in K$,
$$
E_{h_{\varepsilon,\delta}}[L]-E_{h_{\varepsilon,\delta}}[L_0]\geq C d(L,L_0)^2-2C\max\{\varepsilon,\delta\}^2.
$$
Thus, for any $L\in K$ such that $E_{h_{\varepsilon,\delta}}[L]\leq E_{h_{\varepsilon,\delta}}[L_0]$, this implies that $d(L,L_0)\leq C\max\{\varepsilon,\delta\}$ which contradicts the strict local minimality of $L_0$ for sufficiently small $\varepsilon$ and $\delta$. To construct $\tilde{h}_{\varepsilon,\delta}$, the idea is to approximate 
$$
k_{\varepsilon,\delta}(r):=h_{\varepsilon,\delta}(r)e^{\frac{\pi r}{\alpha_0}}=k^1_\varepsilon(r)k^2_\delta(r),\quad k^1_\varepsilon(r):=\frac{g_{\mu^\varepsilon}(\sqrt{r})}{(\varepsilon\sqrt{r})^{\frac{d}{2}-1}},\quad k^2_\delta(r):=\frac{g_{\nu^\delta}(\sqrt{r})}{(\delta\sqrt{r})^{\frac{d}{2}-1}},
$$
by a bounded completely monotone function $\widetilde{k}_{\varepsilon,\delta}=\mathcal{L}[m_{\varepsilon,\delta}]$ such that $m_{\varepsilon,\delta}$ is a positive measure with a compact support, $\widetilde{k}_{\varepsilon,\delta}(0)=1$ and $\|\widetilde{k}_{\varepsilon,\delta}'\|\leq C$. It is indeed sufficient to apply the method described in \cite[proof of Thm. 2]{BetKnupfdiffuse} to $k^1_\varepsilon$ and $k^2_\delta$ that are then approximated respectively by $\widetilde{k}_\varepsilon^1=\mathcal{L}[m_\varepsilon]$ and $\widetilde{k}_\varepsilon^2=\mathcal{L}[m_\delta]$ and where $m_{\varepsilon,\delta}=m_\varepsilon \ast m_\delta$. We therefore get, for any $r>0$,
\begin{equation*}
\left| k_{\varepsilon,\delta}(r)- \widetilde{k}_{\varepsilon,\delta}(r)\right|\leq C\min\{ \max\{\varepsilon,\delta\}^2r,1 \}.
\end{equation*}
Defining $\tilde{h}_{\varepsilon,\delta}(r):=\tilde{k}_{\varepsilon,\delta}(r)e^{-\frac{\pi r}{\alpha_0}}$, we then have
\begin{align*}
\left|  E_{\widetilde{h}_{\varepsilon,\delta}}[L] -E_{h_{\varepsilon,\delta}}[L] \right|&\leq\sum_{q\in L^*} e^{-\frac{\pi |q|^2}{\alpha_0}}\left| k_{\varepsilon,\delta}(|q|^2)- \widetilde{k}_{\varepsilon,\delta}(|q|^2)\right|\\
&\leq C\sum_{q\in L^*} \min\{ \max\{\varepsilon,\delta\}^2|q|^2,1 \}e^{-\frac{\pi |q|^2}{\alpha_0}}\\
&\leq C\max\{\varepsilon,\delta\}^2,
\end{align*}
by the exponential decay of the lattice theta function and where $C$ does not depend on the lattice $L\in K$. Therefore, \eqref{eq-approxL01} is proved. For the second inequality, we have
\begin{align*}
E_{\widetilde{h}_{\varepsilon,\delta}}[L]- E_{\widetilde{h}_{\varepsilon,\delta}}[L_0]\geq Cd(L,L_0)^2,
\end{align*}
for a constant $C$ independent of $\varepsilon$ and $\delta$. Indeed, it is a consequence of the fact that $\tilde{h}_{\varepsilon,\delta}=\mathcal{L}[m_{\varepsilon,\delta}]$ and $m_{\varepsilon,\delta}$ has a compact support and then follows by the strict local minimality of $L_0$ for $L\mapsto \theta_L(\alpha)$ for all $\alpha>0$, which implies the same strict local minimality for any $E_f$ where $f\in \mathcal{CM}_d$ (see Lemma \ref{lem-L4}), and in particular for $E_{\widetilde{h}_{\varepsilon,\delta}}$ (see \cite[Section 2.2]{BetKnupfdiffuse} for details).

\medskip

Let us prove the second part of the theorem and let us begin again with $\mu,\nu\in \mathcal{P}_r^{cm}(\R^d)$. The proof uses the same ingredient as the previous one. Indeed, if $z_0$ is a global minimizer of $z\mapsto \theta_{L+z}(\alpha)$ in $Q_L$ for all $\alpha>0$, then by Lemma \ref{lem-L4} the same holds for
$$
z\mapsto E_f[L+z]=\sum_{p\in L} f(|p+z|^2),
$$
where $f\in \mathcal{CM}_d$. Let $\alpha_0>0$, then by the generalized Jacobi Transformation Formula \eqref{Jacobigeneral}, we have, for some constant $C_{\alpha_0,d}>0$,
$$
\theta_{\mu_L+\nu_z}(\alpha_0) =C_{\alpha_0,d}\sum_{q\in L^*} h(|q|^2)e^{2i\pi q\cdot z},
$$
for a completely monotone function $h\in \mathcal{CM}_d$, as explained previously. We therefore obtain, by Poisson Summation Formula (see e.g. \cite[Appendix A]{SaffLongRange}),
$$
\theta_{\mu_L+\nu_z}(\alpha_0) =\tilde{C}_{\alpha_0,d}\sum_{p\in L} \hat{h}(|p+z|^2).
$$
Since $h$ is completely monotone, then $\hat{h}$ is completely monotone by \cite[Lem. 10]{BetKnupfdiffuse} and therefore $z_0$ is a global minimizer of $z\mapsto \sum_{p\in L} \hat{h}(|p+z|^2)$ by Lemma \ref{lem-L4}, which concludes the proof.

\medskip

Let us now consider $\mu,\nu\in \mathcal{P}_r(\R^d)$. For convenience, we define, for fixed $\alpha_0>0$ and $L\in \mathcal{D}_d$,
$$
F_{h_{\varepsilon,\delta}}(z):=\sum_{q\in L^*} h_{\varepsilon,\delta}(|q|^2)\cos(2\pi q\cdot z),
$$
where $h_{\varepsilon,\delta}$ is defined by \eqref{eq-hepsilondelta}. We again recall that minimizing $z\mapsto \theta_{\mu_L+\nu_z}(\alpha_0)$ is equivalent with minimizing $F_{h_{\varepsilon,\delta}}$. We remark that, for $\widetilde{h}_{\varepsilon,\delta}$ defined above, we have, for any $z\in Q_L$,
\begin{align*}
&\left|  F_{\widetilde{h}_{\varepsilon,\delta}}(z) -F_{h_{\varepsilon,\delta}}(z) \right|\leq C \max\{\varepsilon,\delta\}^2,\quad \textnormal{as $\varepsilon,\delta\to 0$} \\
& \quad F_{\widetilde{h}_{\varepsilon,\delta}}(z)- F_{\widetilde{h}_{\varepsilon,\delta}}(z_0)\geq C |z-z_0|^2,
\end{align*}
for some constant $C>0$. The second inequality follows from the strict local minimality of $z_0$ for $z\mapsto \theta_{L+z}(\alpha_0)$. Therefore, for any $z\in Q_L$,
$$
F_{h_{\varepsilon,\delta}}(z)-F_{h_{\varepsilon,\delta}}(z_0)\geq C|z-z_0|^2 -2C\max\{\varepsilon,\delta\}^2.
$$
Thus, for any $z\in Q_L$ such that $F_{h_{\varepsilon,\delta}}(z)\leq F_{h_{\varepsilon,\delta}}(z_0)$, this implies $|z-z_0|\leq C\max\{\varepsilon,\delta\}$ which is not true for sufficiently small $\varepsilon$ and $\delta$ by the strict local optimality of $z_0$ in $Q_L$ previously shown in Theorem \ref{thm-locminscale}.
\end{proof}
\begin{remark}[Difference between $\varepsilon$ and $\delta$]\label{rmksize} An interesting property is the fact that $\varepsilon$ and $\delta$ can be chosen with different scales -- for instance $\varepsilon=1/n$ and $\delta=1/\sqrt{n}$ --  as long as they are below the critical values $\varepsilon_0$ and $\delta_0$. 
\end{remark}

\begin{remark}[Global minimizer of the lattice theta function]
Notice that the minimizer of $L\mapsto \theta_L(\alpha)$ in $\mathcal{D}_d$ is known to be the same for all $\alpha>0$, so far, only in dimension $d\in \{2,8,24\}$ as proved in \cite{Mont,CKMRV2Theta}. In these dimensions, the minimizers are, respectively, the triangular lattice $\Lambda_1$, $\mathsf{E}_8$ and the Leech lattice.
\end{remark}

\begin{remark}[Global minimizer of the translated lattice theta function]
Few results are already known concerning the minimization of $z\mapsto \theta_{L+z}(\alpha)$, where $L\in \mathcal{D}_d$ is fixed:
\begin{enumerate}
\item In dimension $d=2$, if $L=\Lambda_1$, then Baernstein \cite{Baernstein-1997} proved that the barycenters $z_1=\sqrt{\frac{2}{\sqrt{3}}}\left(\frac{1}{2},\frac{1}{2\sqrt{3}}  \right)$ and $z_2=\sqrt{\frac{2}{\sqrt{3}}}\left( 1,\frac{1}{\sqrt{3}} \right)$ of the primitive triangle composing $Q_{\Lambda_1}$ are the unique global minimizers of $z\mapsto \theta_{\Lambda_1+z}(\alpha)$ for all $\alpha>0$.
\item For any $d\geq 1$, we proved in \cite[Prop. 1.3]{BeterminPetrache} (see also \cite[Prop. 3.7]{BeterminKnuepfer-preprint}) that the center $c_a=\frac{1}{2}(a_1,...,a_d)$ of the primitive cell $Q_{\Z^d_a}$ of the orthorhombic lattice $\Z^d_a$ defined by \eqref{ortho} is the unique global minimizer of $z\mapsto \theta_{\Z^d_a+z}(\alpha)$ for all $\alpha>0$.
\end{enumerate}
Applications of these results will be shown in the next section, as well as the optimality of the center of the primitive hexagon in the honeycomb lattice case, using Baernstein's theorem.

\medskip

\noindent Furthermore, if $z_0$ is a global minimizer of $z\mapsto \theta_{L+z}(\alpha)$ for all $\alpha>0$, then it is the case as $\alpha\to +\infty$ and it turns out that $z_0$ is necessarily a deep hole of $L$, i.e. a solution of the following optimization problem:
\begin{equation}\label{bighole}
\max_{z\in \R^d}\min_{p\in L} |z-p|,
\end{equation}
as we proved in \cite[Thm 1.5]{BeterminPetrache}. Therefore, since this property does not necessarily hold as $\alpha\to 0$ (see e.g. in dimension $d=2$ in \cite[Thm 1.6]{BeterminPetrache} for asymmetric lattices), the global minimizer of $z\mapsto \theta_{L+z}(\alpha)$ is not necessarily the same for all $\alpha$.
\end{remark}

\section{Applications to particular lattices}\label{sect-appli}

We finally apply the previous results to some particular lattices defined by \eqref{Lambda1}-\eqref{FCC} as well as $\mathsf{D}_4$, $\mathsf{E}_8$, the Leech lattice and $\mathsf{D}_d^+$. This is the perfect opportunity to recall the main results that are currently known about the local and global minima of the (translated) lattice theta functions, which are now generalized to the masses interactions case.

\medskip

Montgomery \cite{Mont} proved the minimality of $\Lambda_1$ in $\mathcal{D}_2$ for $L\mapsto \theta_L(\alpha)$ for all $\alpha>0$. Furthermore, as recalled in the previous section, Baernstein \cite{Baernstein-1997} proved the minimality of the two barycenters of the primitive triangles composing $Q_{\Lambda_1}$ for $z\mapsto \theta_{\Lambda_1+z}(\alpha)$ also for all $\alpha>0$.  Therefore, applying Montgomery's theorem, Baernstein's theorem, and our main results, we extend the minimality of $\Lambda_1$ and its barycenters to Gaussian masses interactions. Notice that the $\mu=\nu, z=0$ case has been already proved in \cite{BetKnupfdiffuse}.

\begin{corollary}[The triangular lattice]\label{cor-triangular}
Let $d=2$ and $\Lambda_1$ be defined by \eqref{Lambda1}. We then have:
\begin{enumerate}
\item  For any $\mu,\nu \in \mathcal{P}_r(\R^2)$, $\Lambda_1$ is a critical point of $L\mapsto \theta_{\mu_L+\nu_0}(\alpha)$ for all $\alpha>0$. Moreover, for all $\alpha_0>0$, $\Lambda_1$ is a global minimizer of $L\mapsto \theta_{\mu_L+\nu_0}(\alpha_0)$ in $\mathcal{D}_2$ at a low scale. Furthermore, if $\mu,\nu \in \mathcal{P}_r^{cm}(\R^2)$, then the global minimality holds at all scales.

\item Let $z_1=\sqrt{\frac{2}{\sqrt{3}}}\left(\frac{1}{2},\frac{1}{2\sqrt{3}}  \right)$ and $z_2=\sqrt{\frac{2}{\sqrt{3}}}\left( 1,\frac{1}{\sqrt{3}} \right)$ be the barycenters of the primitive triangles of $Q_{\Lambda_1}$, then, for any $\mu,\nu \in \mathcal{P}_r(\R^2)$, $z_1$ and $z_2$ are critical points of $z\mapsto \theta_{\mu_{\Lambda_1}+\nu_z}(\alpha)$ in $Q_{\Lambda_1}$ at all scales. Furthermore, for any $\alpha_0>0$, $z_1$ and $z_2$ are the unique global minimizers of $z\mapsto \theta_{\mu_{\Lambda_1}+\nu_z}(\alpha_0)$ in $Q_{\Lambda_1}$ at a low scale. Moreover, if  $\mu,\nu \in \mathcal{P}_r^{cm}(\R^2)$, then the minimality of $z_1$ and $z_2$ holds at all scales.
\end{enumerate}
\end{corollary}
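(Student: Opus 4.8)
The plan is to obtain every assertion as a direct instance of the three main results of Section~\ref{sect-main}, fed with the two classical point-case theorems already available in dimension two: Montgomery's theorem, giving that $\Lambda_1$ is the unique global minimizer of $L\mapsto\theta_L(\alpha)$ in $\mathcal{D}_2$ for every $\alpha>0$, and Baernstein's theorem, giving that $z_1,z_2$ are the unique global minimizers of $z\mapsto\theta_{\Lambda_1+z}(\alpha)$ in $Q_{\Lambda_1}$ for every $\alpha>0$. The one preliminary observation I would make is that ``unique global minimizer for all $\alpha>0$'' already supplies the ``strict local minimizer for all $\alpha>0$'' hypothesis required by Theorems~\ref{thm-locminscale} and~\ref{thm-mintransscale}: uniqueness forces $\theta_\cdot(\alpha)$ to be strictly larger at every nearby distinct lattice (resp. point).

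For the first part, since $\Lambda_1$ is in particular a critical point of $L\mapsto\theta_L(\alpha)$ for all $\alpha>0$, Proposition~\ref{prop-criticality}(1) gives at once that $\Lambda_1$ is a critical point of $L\mapsto\theta_{\mu_L+\nu_0}(\alpha)$ for all $\alpha>0$ and all $\mu,\nu\in\mathcal{P}_r(\R^2)$; the global minimality at a low scale is then verbatim Theorem~\ref{thm-mintransscale}(1) applied with $L_0=\Lambda_1$, and its completely-monotone clause promotes this to all scales when $\mu,\nu\in\mathcal{P}_r^{cm}(\R^2)$. For the second part, Baernstein's theorem makes $z_1,z_2$ critical points for all $\alpha>0$, so Proposition~\ref{prop-criticality}(2) yields criticality of $z\mapsto\theta_{\mu_{\Lambda_1}+\nu_z}(\alpha)$ at all scales, while Theorem~\ref{thm-mintransscale}(2) gives the global minimality at a low scale (and at all scales under the completely-monotone assumption). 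To recover the asserted uniqueness I would reread the proof of Theorem~\ref{thm-mintransscale}(2): its approximation estimate bounds any competing near-minimizer by $C\max\{\varepsilon,\delta\}$ away from $z_0$, which clashes with the strict local minimality of Theorem~\ref{thm-locminscale}(2) once $\varepsilon,\delta$ are small, leaving $z_1,z_2$ as the only minimizers in $Q_{\Lambda_1}$.

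The main obstacle is not computational but one of hypothesis-matching: the proofs behind Theorems~\ref{thm-locminscale} and~\ref{thm-mintransscale} quietly use positive-definiteness of $D^2\theta_{L_0}(\alpha_0)$ (resp. $D^2_z\theta_{L+z_0}(\alpha_0)$), a non-degeneracy strictly stronger than the topological strict local minimality, so the point I would want to be careful about is that the Hessian really is positive definite at the hexagonal lattice and at $z_1,z_2$ for every $\alpha>0$. This is known for these highly symmetric configurations, and once it is granted the corollary is a pure translation of the general theorems into the triangular case.
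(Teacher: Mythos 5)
Your proposal matches the paper's own derivation: the corollary is given there without a separate proof, precisely as the combination of Montgomery's and Baernstein's point-case theorems with Proposition \ref{prop-criticality} and Theorems \ref{thm-locminscale} and \ref{thm-mintransscale}, exactly as you assemble it (including reading the uniqueness at a low scale off the approximation estimates in the proof of Theorem \ref{thm-mintransscale}). Your closing caveat about positive definiteness of $D^2\theta_{L_0}(\alpha_0)$ and $D^2_z\theta_{L+z_0}(\alpha_0)$ correctly identifies a hypothesis left implicit in the paper's general theorems (which infer a definite Hessian from topological strict local minimality), so flagging and verifying it for $\Lambda_1$ and $z_1,z_2$ is a refinement of, not a departure from, the paper's route.
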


\begin{remark}[Importance of the triangular lattice]
The triangular lattice arises in many physical models such as Bose-Einstein Condensates \cite{AftBN}, Superconductivity \cite{Sandier_Serfaty}, Coulomb Gases \cite{2DSandier} or diblock copolymer interaction \cite{CheOshita}. We also recall that $\Lambda_1$ is conjectured (see Cohn and Kumar \cite[Conjecture 9.4]{CohnKumar}) to be the unique minimizer of the lattice theta function among periodic configurations (not only among Bravais lattices) of fixed density. Furthermore, $\Lambda_1$ can also be viewed as a layer of a FCC or BCC lattice potentially shifted by a vector parallel to $z_1$ or $z_2$, as explained in \cite[Sect. II.2]{BeterminPetrache}. Then, as already discussed in \cite{Beterminlocal3d}, Corollary \ref{cor-triangular} is of great interest for the understanding of BCC and FCC stability in the smeared out particle case, using dimension reduction techniques as in \cite{BeterminPetrache}.
\end{remark}
\medskip

We now show the following result, using Baernstein's theorem \cite[Thm. 1]{Baernstein-1997}, for the honeycomb lattice $\mathsf{H}$.

\begin{prop}[The honeycomb lattice]\label{cor-honey}
Let $d=2$, and $\mathsf{H}$ be defined by \eqref{def-honeycomb} with the primitive hexagon $\mathcal{H}$ containing the center $z_0=\sqrt{\frac{2}{\sqrt{3}}}\left(\frac{1}{2},\frac{1}{2\sqrt{3}}\right)$. Then, for any $\mu,\nu \in \mathcal{P}_r(\R^2)$, $z_0$ is a critical point of $z\mapsto \theta_{\mu_{\mathsf{H}}+\nu_z}(\alpha)$ in $\mathcal{H}$ at all scales. Furthermore, $z_0$ is the unique global minimizer of $z\mapsto \theta_{\mu_{\mathsf{H}}+\nu_z}(\alpha)$ in $\mathcal{H}$ at a low scale. Moreover, if $\mu,\nu \in \mathcal{P}_r^{cm}(\R^2)$, then the minimality holds at all scales.
\end{prop}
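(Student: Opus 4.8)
The plan is to reduce the whole statement to the triangular lattice via the identity recorded in the honeycomb remark,
\[
\theta_{\mu_{\mathsf H}+\nu_z}(\alpha)=\tfrac12\bigl(\theta_{\mu_{\Lambda_1}+\nu_z}(\alpha)+\theta_{\mu_{\Lambda_1}+\nu_{z+u}}(\alpha)\bigr),
\]
and to combine Corollary \ref{cor-triangular} (the soft triangular results, built on Baernstein's theorem \cite{Baernstein-1997}) with a single geometric observation about the location of $z_0$. This avoids any direct use of the generalized Jacobi formula, which is stated only for Bravais lattices and hence does not apply to the non-Bravais configuration $\mathsf H$.

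First I would isolate the geometric lemma. Writing $c=\sqrt{2/\sqrt3}$, Baernstein's two barycenters are $z_1=c(\tfrac12,\tfrac1{2\sqrt3})$ and $z_2=c(1,\tfrac1{\sqrt3})$, and by hypothesis $z_0=z_1$. A direct computation in the basis $v_1=c(1,0)$, $v_2=c(\tfrac12,\tfrac{\sqrt3}{2})$ of $\Lambda_1$ shows $z_0+u\equiv z_2\pmod{\Lambda_1}$ (one checks $z_0+u-v_2=c(0,\tfrac1{\sqrt3})=z_2-v_1$), whereas $z_2+u=2v_2\equiv0\pmod{\Lambda_1}$ is a lattice point. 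Consequently, among the representatives in one fundamental cell, $z_0$ is the \emph{unique} point for which both $z$ and $z+u$ lie in $\{z_1,z_2\}$ modulo $\Lambda_1$: the choice $z\equiv z_1$ forces $z+u\equiv z_2$, which is admissible, while $z\equiv z_2$ forces $z+u\equiv0\notin\{z_1,z_2\}$, which is not.

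With this lemma, criticality at all scales is immediate: Corollary \ref{cor-triangular}(2) gives that $z_1$ and $z_2$ are critical points of $z\mapsto\theta_{\mu_{\Lambda_1}+\nu_z}(\alpha)$ at all scales, so differentiating the decomposition and using the $\Lambda_1$-periodicity of $w\mapsto\theta_{\mu_{\Lambda_1}+\nu_w}(\alpha)$ together with $z_0\equiv z_1$ and $z_0+u\equiv z_2$ shows both summands have vanishing gradient at $z_0$. For global minimality I would argue as follows. In the regime where Corollary \ref{cor-triangular}(2) applies (at a low scale for $\mu,\nu\in\mathcal P_r(\R^2)$, at all scales for $\mu,\nu\in\mathcal P_r^{cm}(\R^2)$, using the rescaled energies $\theta_{\mu^\varepsilon_{\Lambda_1}+\nu^\delta_z}(\alpha_0)$ with the same thresholds $\varepsilon_0,\delta_0$), the function $F_1:z\mapsto\theta_{\mu_{\Lambda_1}+\nu_z}(\alpha_0)$ attains its minimum value $m$ exactly on $\{z_1,z_2\}$ modulo $\Lambda_1$. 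Hence $\theta_{\mu_{\mathsf H}+\nu_z}(\alpha_0)=\tfrac12\bigl(F_1(z)+F_1(z+u)\bigr)\ge m$, with equality iff both $z$ and $z+u$ minimize $F_1$, i.e. (by the lemma) iff $z\equiv z_0$. Since the hexagon $\mathcal H$ is a fundamental domain for the $\Lambda_1$-periodic map $z\mapsto\theta_{\mu_{\mathsf H}+\nu_z}(\alpha_0)$ and is centred at $z_0$, this yields that $z_0$ is the unique global minimizer on $\mathcal H$.

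The hard part, modest as it is, is the geometric lemma, and specifically the verification that $z_2$ fails: one must confirm that $z_2+u$ reduces to a lattice point (the worst possible location, not a deep hole), so that the two minimizing sets $\{z_1,z_2\}$ and $\{z_1,z_2\}-u$ intersect in exactly the single point $z_0$. Everything else is routine bookkeeping: matching the hexagonal cell $\mathcal H$ (of area $1=\mathrm{covol}(\Lambda_1)$) with $Q_{\Lambda_1}$ as a fundamental domain, and invoking Corollary \ref{cor-triangular} as a black box so that the minimality of the triangular barycenters is transferred to the honeycomb without any further analytic work.
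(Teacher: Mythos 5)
Your proposal is correct, and it takes a genuinely different, more modular route than the paper. Both arguments rest on the same two ingredients --- the decomposition $\theta_{\mu_{\mathsf H}+\nu_z}(\alpha)=\tfrac12\bigl(\theta_{\mu_{\Lambda_1}+\nu_z}(\alpha)+\theta_{\mu_{\Lambda_1}+\nu_{z+u}}(\alpha)\bigr)$ and the fact that the shift by $u$ permutes Baernstein's barycenters modulo $\Lambda_1$ --- but the paper proceeds point-case-first: it proves that $z_1$ is the unique minimizer of $z\mapsto\theta_{\mathsf H+z}(\alpha)$ for all $\alpha>0$ (with uniqueness disposed of ``by symmetry''), and then asserts that the transfer machinery of the second part of Theorem \ref{thm-mintransscale} --- generalized Jacobi formula plus completely monotone approximation --- ``remains true in this non-Bravais case by simply following the lines of its proof.'' You instead transfer first and combine second: you take Corollary \ref{cor-triangular}(2) as a black box, so that in the relevant regime (low scale for $\mathcal P_r$, all scales for $\mathcal P_r^{cm}$) the minimum set of the soft triangular energy is exactly $\{z_1,z_2\}$ modulo $\Lambda_1$, and then the averaging inequality together with your geometric lemma ($z_0+u\equiv z_2$ and $z_2+u=2v_2\equiv 0 \pmod{\Lambda_1}$) forces equality exactly at $z\equiv z_0$. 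This buys two things: you never need to extend Proposition \ref{prop-Jacobigeneral} or the approximation argument beyond Bravais lattices (the step the paper leaves to the reader), and your uniqueness argument is fully explicit where the paper's ``by symmetry'' is terse --- in particular your verification that $z_2+u$ reduces to a lattice point is precisely the computation that makes the two minimizing sets $\{z_1,z_2\}$ and $\{z_1,z_2\}-u$ meet in the single point $z_0$. What the paper's route buys in exchange is the explicit point-case statement for $\theta_{\mathsf H+z}(\alpha)$ at every $\alpha$, which is of independent interest; your argument contains it as the special case $\mu=\nu=\delta_0$. Two small points you handled correctly and should keep: the paper's identity $z_1+u=z_2$ holds only modulo $\Lambda_1$ (indeed $z_1+u-z_2=v_2-v_1\in\Lambda_1$), and your observation that $\mathcal H$ has unit area and is a fundamental domain for $\Lambda_1$ with $z_0$ in its interior is what legitimately converts ``unique minimizer mod $\Lambda_1$'' into uniqueness on $\mathcal H$.
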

\begin{proof}
It is sufficient to show the $\mu=\nu=\delta_0$ case and to apply the second part of Theorem \ref{thm-mintransscale} which remains true in this non-Bravais case by simply following the lines of its proof and using the fact that
\begin{equation*}
\theta_{\mathsf{H}+z}(\alpha)=\frac{1}{2}\left(\theta_{\Lambda_1+z}(\alpha)+\theta_{\Lambda_1+u+z}(\alpha)\right).
\end{equation*}
Since $z\mapsto \theta_{\Lambda_1+z}(\alpha)$ have two global minimizers in $Q_{\Lambda_1}$ that are the barycenters of the primitive triangle of $Q_{\Lambda_1}$, we obtain that, for any $z\in \mathcal{H}$,
$$
\theta_{\Lambda_1+z}(\alpha)\geq \theta_{\Lambda_1+z_1}(\alpha).
$$
 Furthermore, by symmetry, $z_1+u=z_2$ is the second gobal minimizer of $z\mapsto \theta_{\Lambda_1+z}(\alpha)$ and we get, for any $z\in \mathcal{H}$,
$$
\theta_{\Lambda_1+u+z}(\alpha)\geq \theta_{\Lambda_1+z_2}(\alpha)=\theta_{\Lambda_1+u+z_1}(\alpha).
$$
We therefore have proved that
$$
\theta_{\mathsf{H}+z}(\alpha)\geq \theta_{\mathsf{H}+z_1}(\alpha),
$$
and $z_1$ is the unique global minimizer of $z\mapsto \theta_{\mathsf{H}+z}(\alpha)$ in $\mathcal{H}$ for all $\alpha>0$, by symmetry. 
\end{proof}

\begin{figure}[!h]
\centering
\includegraphics[width=12cm]{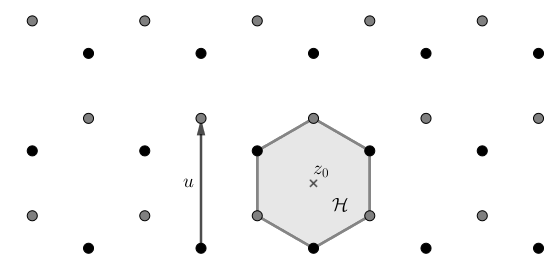} 
\caption{Minimizer $z_0$ of $z\mapsto \theta_{\mathsf{H}+z}(\alpha)$ in a primitive hexagon $\mathcal{H}$. The black dots represent $\Lambda_1$ and the grey one represent $\Lambda_1+u$.}
\label{Honeycomb}
\end{figure}

\begin{remark}[Gaussian interaction between Lithium atom and Graphene sheet]
The Gaussian interaction between a mass and a honeycomb structure is physically relevant. For instance, in \cite{GaussianGraphene}, the authors have been designed a Gaussian approximation potential modelling the interaction energy between a Lithium atom and a Graphene sheet structure. This has been done by applying Machine Learning to Density Functional Theory reference data. Our result gives the exact location of the Lithium atom minimizing the Gaussian interaction with a Graphene sheet.
\end{remark}

Because of the particular role of the cubic lattices $\Z^d$ and their orthorhombic deformations $\Z^d_a$ defined by \eqref{ortho}, we have summarized all the results related to them in the following corollary. The first point is an easy consequence of \cite[Prop. 5.1]{Beterminlocal3d}, the second point is based on \cite[Thm 2]{Mont} and the third point follows from \cite[Prop. 1.3]{BeterminPetrache}.

\begin{corollary}[Cubic and Orthorhombic lattices]
Let $d\geq 2$, then
\begin{enumerate}
\item For any $\mu,\nu\in \mathcal{P}_r(\R^d)$, $\Z^d$ is a critical point of $L\mapsto \theta_{\mu_L+\nu_0}(\alpha)$ in $\mathcal{D}_d$ at all scales. 
\item For any $\mu,\nu\in \mathcal{P}_r(\R^d)$ and any $\alpha_0>0$, $a=(1,1,...,1)$ is the unique minimizer of $a\mapsto \theta_{\mu_{\Z^d_a}+\nu_0}(\alpha_0)$ in $\{(a_1,...,a_d)\in (0,+\infty) : \prod_{i=1}^d a_i=1  \}$ at a low scale, i.e. $\Z^d$ is the unique minimizer of $L\mapsto \theta_{\mu_L+\nu_0}(\alpha_0)$ among orthorhombic lattices at a low scale. Moreover, if $\mu,\nu\in \mathcal{P}_r^{cm}(\R^d)$, the minimality holds at all scales.
\item Let $(a_1,...,a_d)\in (0,+\infty)$ be such that $\prod_{i=1}^d a_i=1$ and $c_a:=\frac{1}{2}(a_1,...,a_d)$ be the center of $Q_{\Z^d_a}$. Then, for any $\mu,\nu\in \mathcal{P}_r(\R^d)$,  $c_a$  is a critical point of $z\mapsto \theta_{\mu_{\Z^d_a}+\nu_z}(\alpha)$ in $Q_{\Z^d_a}$ at all scales. Furthermore, for any $\alpha_0>0$, $c_a$ is the unique global minimizer of $z\mapsto \theta_{\mu_{\Z^d_a}+\nu_z}(\alpha_0)$ in $Q_{\Z^d_a}$  at a low scale. Moreover, if  $\mu,\nu\in \mathcal{P}_r^{cm}(\R^d)$, the minimality holds at all scales.
\end{enumerate}
\end{corollary}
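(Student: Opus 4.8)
The plan is to handle the three items one at a time, in each case quoting the corresponding classical (point) minimality or criticality statement and then transferring it to the soft setting via the results of Section~\ref{sect-main}.

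For item~(1) I would start from the fact that $\Z^d$ is a critical point of $L\mapsto\theta_L(\alpha)$ in $\mathcal{D}_d$ for every $\alpha>0$, which is \cite[Prop. 5.1]{Beterminlocal3d} and expresses that $\Z^d$ is self-dual and carries enough point symmetry to force $\nabla\theta_L(\alpha)=0$. Since criticality ``at all scales'' is by definition criticality for every $\alpha$, the first part of Proposition~\ref{prop-criticality} applies directly: for any $\mu,\nu\in\mathcal{P}_r(\R^d)$, $\Z^d$ remains a critical point of $L\mapsto\theta_{\mu_L+\nu_0}(\alpha)$ for all $\alpha$, with no restriction on the size of the measures, because criticality is preserved through the representation $h=\mathcal{L}[\mu_h]$ independently of the scaling.

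Item~(3) follows the same template applied to the translated function. I would begin from \cite[Prop. 1.3]{BeterminPetrache}, which gives that $c_a=\frac{1}{2}(a_1,\dots,a_d)$ is the unique global minimizer of $z\mapsto\theta_{\Z^d_a+z}(\alpha)$ in $Q_{\Z^d_a}$ for every $\alpha>0$; in particular $c_a$ is a critical point and a strict local minimizer for all $\alpha$. Criticality at all scales then comes from the second part of Proposition~\ref{prop-criticality}; the low-scale uniqueness of $c_a$ for each fixed $\alpha_0$ from the second part of Theorem~\ref{thm-mintransscale} (whose strict local input is guaranteed by the uniqueness just quoted); and the all-scales statement for $\mu,\nu\in\mathcal{P}_r^{cm}(\R^d)$ from the completely monotone branch of that theorem, using that $\hat\mu\hat\nu$ is completely monotone (by \cite[Lem. 10]{BetKnupfdiffuse}) and that $t\mapsto e^{-\pi t/\alpha_0}$ is completely monotone, so the transferred kernel lies in $\mathcal{CM}_d$.

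Item~(2) is where I expect the only real difficulty, and it is structural rather than computational. For $d\geq2$ the lattice $\Z^d$ is \emph{not} the global minimizer of $L\mapsto\theta_L(\alpha)$ on all of $\mathcal{D}_d$, so Theorem~\ref{thm-mintransscale} cannot be quoted over the full fundamental domain; the assertion lives only on the orthorhombic subfamily $\{\Z^d_a:\prod_i a_i=1\}$. On that family the required point-case input is Montgomery's Theorem \cite[Thm. 2]{Mont}, which via the factorization $\theta_{\Z^d_a}(\alpha)=\prod_i\theta_{\Z}(\alpha a_i^2)$ and a one-dimensional convexity estimate gives that $a=(1,\dots,1)$ is the unique, and strict, minimizer for every $\alpha$. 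I would then rerun the proof of Theorem~\ref{thm-mintransscale} with $\mathcal{D}_d$ replaced by this $(d-1)$-parameter family: the dual $(\Z^d_a)^*=\Z^d_{1/a}$ is again orthorhombic, so Proposition~\ref{prop-Jacobigeneral} still writes $\theta_{\mu_{\Z^d_a}+\nu_0}(\alpha_0)$ as an energy $E_{h_{\varepsilon,\delta}}$ over an orthorhombic dual lattice; the completely monotone transfer of Lemmas~\ref{lem-L2}--\ref{lem-L4} is insensitive to the ambient family; and the coercivity-plus-approximation argument \eqref{eq-approxL01}--\eqref{eq-approxL02} carries over once one checks that minimizers remain in a compact subset of the orthorhombic parameter space, which follows from the same barrier estimate used in the proof after letting any $a_i\to0$ or $a_i\to\infty$. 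The single genuinely new point to verify is that every step respects the constraint $\prod_i a_i=1$; granting this, both the low-scale uniqueness for general $\mu,\nu$ and the all-scales statement for $\mu,\nu\in\mathcal{P}_r^{cm}(\R^d)$ drop out exactly as in the unconstrained case.
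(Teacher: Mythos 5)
Your proposal matches the paper's intended argument: the paper proves this corollary precisely by citing \cite[Prop. 5.1]{Beterminlocal3d} for item (1) via Proposition \ref{prop-criticality}, Montgomery's \cite[Thm 2]{Mont} for item (2) with the proof of Theorem \ref{thm-mintransscale} rerun on the orthorhombic subfamily (where $(\Z^d_a)^*=\Z^d_{1/a}$ keeps everything inside the family), and \cite[Prop. 1.3]{BeterminPetrache} for item (3) combined with Proposition \ref{prop-criticality} and the second part of Theorem \ref{thm-mintransscale}. Your added care about coercivity and the constraint $\prod_i a_i=1$ in item (2) is exactly the adaptation the paper implicitly relies on, so the proposal is correct and essentially identical in approach.
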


\begin{remark}[Octahedral site of the cubic lattice]\label{rmk:octa}
In the cubic case $(a_1,...,a_d)=(1,...,1)$, the center $c=(1/2,...,1/2)$, also called the octahedral site of $\Z^d$, is indeed the preferred location to add an extra atom in order to create, for example, an ion (like the CsCl which has a BCC structure).
\end{remark}

In \cite{BeterminPetrache}, it has been proved, using a computer assisted method, that $\mathsf{D}_3^*$ (resp. $\mathsf{D}_3$) is a strict local minimizer of $L\mapsto \theta_L(\alpha)$ for any $\alpha\in \mathcal{A}$ (resp. $\alpha \in \mathcal{B}$) defined by
\begin{equation}\label{eq-A}
\mathcal{A}:=\{0.001k: k\in \N, 1\leq k\leq 1000  \},\quad \mathcal{B}:=\left\{ 1/x: x\in\mathcal A \right\}.
\end{equation}
We also proved in \cite[Thm. 1.3]{Beterminlocal3d} that the same result holds for extremal values of $\alpha$, i.e. if $\alpha>\alpha_1$ (resp. $0<\alpha<\alpha_1^{-1}$) for $\mathsf{D}_3$ (resp. $\mathsf{D}_3^*$), for some $\alpha_1>1$. Moreover,we also add that $\mathsf{D}_3$ and $\mathsf{D}_3^*$ appear to be saddle points for the lattice theta function respectively for $\alpha<\alpha_2^{-1}$ and $\alpha>\alpha_2$. Applying Theorem \ref{thm-locminscale} to these specific $\alpha$, we get the strict local minimality of these lattices for measures that are sufficiently concentrated around the lattice points, where the threshold values $\varepsilon_0$ and $\delta_0$ only depend on $\mu,\nu$, the maximum of $\mathcal{A}$ and $\mathcal{B}$ or on $\alpha_1$ (see Remark \ref{rmk-max}).

\medskip

\begin{corollary}[The FCC and BCC lattices]\label{cor:BCCFCC}
Let $\mu,\nu\in \mathcal{P}_r(\R^3)$, we have:
\begin{enumerate}
\item The lattices $\mathsf{D}_3$ and $\mathsf{D}_3^*$ are critical points of $L\mapsto \theta_{\mu_L+\nu_0}(\alpha)$ in $\mathcal{D}_3$ at all scales.
\item Let $\mathcal{A},\mathcal{B}$ be defined by \eqref{eq-A}. For any $\alpha_0\in \mathcal{A}$ (resp. $\alpha_0\in \mathcal{B}$), $\mathsf{D}_3^*$ (resp. $\mathsf{D}_3$) is a strict local minimizer of $L\mapsto \theta_{\mu_L+\nu_0}(\alpha_0)$ on $\mathcal{D}_3$ at a low scale. Furthermore, $\varepsilon_0$ and $\delta_0$ only depend on $\mu,\nu$, $a=\max \mathcal{A}$ and $b=\max \mathcal{B}$. 
\item There exists $\alpha_1>1$ such that for any $\alpha_0\in (\alpha_1,+\infty)$ (resp. $\alpha\in (0,\alpha_1^{-1})$), $\mathsf{D}_3$ (resp. $\mathsf{D}_3^*$) is a strict local minimizer of $L\mapsto \theta_{\mu_{L}+\nu_{0}}(\alpha_0)$ at a low scale. In the $\mathsf{D}_3^*$ case, $\varepsilon_0$ and $\delta_0$ only depend on $\mu,\nu$ and $\alpha_1$.
\end{enumerate}
\end{corollary}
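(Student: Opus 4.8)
The plan is to obtain all three points as direct applications of Proposition \ref{prop-criticality} and Theorem \ref{thm-locminscale}, once the required hypotheses on the point theta function $L\mapsto\theta_L(\alpha)$ have been collected from the literature. The crucial preliminary observation, used throughout, is that both $\mathsf{D}_3$ and $\mathsf{D}_3^*$ are critical points of $L\mapsto\theta_L(\alpha)$ for \emph{all} $\alpha>0$: this holds because their duals have enough symmetries, as recalled after Proposition \ref{prop-criticality} and classified in \cite[Section 3]{LBMorse} (these lattices being among the only candidates in dimension $3$). This is exactly the hypothesis ``critical for all $\alpha>0$'' shared by the first parts of both results, and since it holds for every $\mu,\nu\in\mathcal{P}_r(\R^3)$, it is insensitive to any rescaling $\mu^\varepsilon,\nu^\delta$.

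For point (1), I would simply feed this criticality into Proposition \ref{prop-criticality}(1): since $\mathsf{D}_3$ (resp. $\mathsf{D}_3^*$) is a critical point of $L\mapsto\theta_L(\alpha)$ for all $\alpha>0$, the proposition yields that it is a critical point of $L\mapsto\theta_{\mu_L+\nu_0}(\alpha)$ for all $\alpha>0$ and any $\mu,\nu\in\mathcal{P}_r(\R^3)$, which is precisely criticality at all scales. No complete monotonicity is needed. For point (2), the extra input is the strict local minimality of $L\mapsto\theta_L(\alpha_0)$ established in \cite{BeterminPetrache}: $\mathsf{D}_3^*$ is a strict local minimizer for each $\alpha_0\in\mathcal{A}$ and $\mathsf{D}_3$ for each $\alpha_0\in\mathcal{B}$. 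Combined with the criticality for all $\alpha>0$, the hypotheses of Theorem \ref{thm-locminscale}(1) are met, so each lattice is a strict local minimizer of $L\mapsto\theta_{\mu_L+\nu_0}(\alpha_0)$ at a low scale. The threshold claim then follows from Remark \ref{rmk-max}: as $\mathcal{A}$ and $\mathcal{B}$ are finite, the thresholds $\varepsilon_0,\delta_0$ can be chosen uniformly over each set, depending only on $\mu,\nu$ and on $a=\max\mathcal{A}$, $b=\max\mathcal{B}$ — the governing quantity being the largest $\alpha_0$, since the decay factor $e^{-\pi r/\alpha_0}$ controlling the Hessian perturbation $A_{\varepsilon,\delta}[L_0]$ in the proof of Theorem \ref{thm-locminscale} is weakest there. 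Point (3) has the identical structure, now drawing the strict local minimality for extremal $\alpha$ from \cite[Thm. 1.3]{Beterminlocal3d} ($\mathsf{D}_3$ for $\alpha_0>\alpha_1$, $\mathsf{D}_3^*$ for $\alpha_0<\alpha_1^{-1}$) and invoking Theorem \ref{thm-locminscale}(1) again.

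The only subtlety — a matter of care rather than a genuine difficulty — is the threshold dependence in point (3), where the admissible $\alpha_0$ range is an interval instead of a finite set. For $\mathsf{D}_3^*$ the interval $(0,\alpha_1^{-1})$ is bounded above, so Remark \ref{rmk-max} again produces uniform thresholds depending only on $\mu,\nu$ and on its supremum $\alpha_1^{-1}$, hence on $\alpha_1$. For $\mathsf{D}_3$ the interval $(\alpha_1,+\infty)$ is unbounded, and no uniform pair $(\varepsilon_0,\delta_0)$ can be extracted — which is exactly why the uniformity is asserted only in the $\mathsf{D}_3^*$ case. Since every step is an invocation of an already-proved general theorem, there is no analytic obstacle: the entire proof consists in matching hypotheses and tracking the quantitative dependence of the scale thresholds.
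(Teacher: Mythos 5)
Your proposal is correct and takes essentially the same route as the paper, which states this corollary without a separate proof precisely because it follows, as you argue, from the criticality of $\mathsf{D}_3$ and $\mathsf{D}_3^*$ for all $\alpha>0$ (via the symmetry argument and the classification in \cite[Section 3]{LBMorse}) fed into Proposition \ref{prop-criticality} and Theorem \ref{thm-locminscale}, with the strict local minimality inputs taken from \cite{BeterminPetrache} for $\alpha_0\in\mathcal{A}\cup\mathcal{B}$ and from \cite[Thm. 1.3]{Beterminlocal3d} for extremal $\alpha_0$, and the threshold dependence read off from Remark \ref{rmk-max}. Your observation that uniformity of $(\varepsilon_0,\delta_0)$ is available over the bounded ranges ($\mathcal{A}$, $\mathcal{B}$, and $(0,\alpha_1^{-1})$ for $\mathsf{D}_3^*$) but not over the unbounded interval $(\alpha_1,+\infty)$ for $\mathsf{D}_3$ is exactly the reading intended by the statement.
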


\begin{remark}[Conjecture for $\mathsf{D}_3$ and $\mathsf{D}_3^*$ in the soft case]
We also notice that, according to Sarnak and Str\"ombergsson \cite[Eq. (43)]{SarStromb}, $\mathsf{D}_3$ (resp. $\mathsf{D}_3^*$) is expected to be a global minimizer of $L\mapsto \theta_L(\alpha)$ in $\mathcal{D}_3$ for any $\alpha\geq 1$ (resp. $\alpha\leq 1$). Corollary \ref{cor:BCCFCC} and the fact that the only three-dimensional ``volume-stationary" lattices are $\Z^3,\mathsf{D}_3$ and $\mathsf{D_3}^*$ (see \cite[Section 3]{LBMorse}) supports this conjecture for the soft lattice theta function.
\end{remark}

\begin{remark}[Minimality of $\mathsf{D}_3$ and $\mathsf{D}_3^*$ among body-centred-orthorhombic lattices]
Based on \cite[Thm 1.7]{BeterminPetrache}, it is also straightforward to prove the global minimality of $\mathsf{D}_3^*$ (respectively $\mathsf{D}_3$) among body-centred-orthorhombic lattices (resp. their dual lattices), which are the anisotropic dilations of the BCC lattice (based on the unit cube) along the coordinate axes by $\sqrt{y},1/\sqrt{y}$ and $t$. They are defined by
$$
L_{y,t}:=\bigcup_{k\in \Z} \left( \Z(\sqrt{y},0) \oplus \Z\left(0,\frac{1}{\sqrt{y}}\right) + (\sqrt{y}/2,1/2 \sqrt{y},0)\mathds{1}_{2\Z}(k) +k(0,0,t/2) \right),
$$ 
and $\mathsf{D}_3^*$ (resp. $\mathsf{D}_3$) is the unique minimizer of $L\mapsto \theta_{\mu_L+\nu_0}(\alpha_0)$ in this class of lattices where $\alpha_0\in \mathcal{A}$ (resp. $\mathcal{B}$) and $t=1$, at a low scale.
\end{remark}

\begin{remark}[Deep holes and BCC/FCC lattices]
For $L\in \{\mathsf{D}_3,\mathsf{D}_3^*\}$, the global minimizers of $z\mapsto \theta_{L+z}(\alpha)$ in $Q_L$ are expected to be the deep holes of $L$, solution of \eqref{bighole}, i.e. $z_0=2^{-\frac{1}{3}}(1,1,1)$ for $\mathsf{D}_3$ and $z_1=2^{-\frac{2}{3}}(1,1,0)$ for $\mathsf{D}_3^*$ as well as all their images in $Q_L$ by symmetry. These locations are the usual one where a different atom can be added to the structure, in order to create, for example, an ion. They are also called octahedral sites, as for the cubic lattice (see Remark \ref{rmk:octa}).
\end{remark}

Finally, the same kind of local results for $L\mapsto \theta_{\mu_L+\nu_0}(\alpha)$ can be stated for some other special lattices, based on  \cite{SarStromb,Coulangeon:2010uq,CoulSchurm2018}.

\begin{corollary}[Dimensions $d\geq 4$ -- Special cases] We have that:
\begin{enumerate}
\item $\mathsf{D}_4$, $\mathsf{E}_8$ and the Leech lattice are strict local minimizers of $L\mapsto \theta_{\mu_L+\nu_0}(\alpha_0)$ on $\mathcal{D}_d$, for all $\alpha_0>0$, at a low scale if $\mu,\nu\in \mathcal{P}_r(\R^d)$ and at all scales if $\mu,\nu \in \mathcal{P}_r^{cm}(\R^d)$, for the corresponding dimensions $d\in \{4,8,24\}$.
\item For all odd integer $d\geq 9$, there exists $\alpha_d$ such that for any $\alpha_0>\alpha_d$, $\mathsf{D}_d^+$ is a strict local minimizer of $L\mapsto \theta_{\mu_L+\nu_0}(\alpha_0)$ on $\mathcal{D}_d$ at a low scale if  $\mu,\nu\in \mathcal{P}_r(\R^d)$.
\item $\mathsf{E}_8$ and the Leech lattice are global minimizers of $L\mapsto \theta_{\mu_L+\nu_0}(\alpha_0)$ on $\mathcal{D}_d$, for any $\alpha_0>0$, at a low scale if $\mu,\nu\in \mathcal{P}_r(\R^d)$ and at all scales if $\mu,\nu \in \mathcal{P}_r^{cm}(\R^d)$, for the corresponding dimensions $d\in \{8,24\}$
\end{enumerate} 
\end{corollary}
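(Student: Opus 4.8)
The plan is to reduce each of the three statements to the point case $\mu=\nu=\delta_0$ and then invoke our two main results. All three parts concern only the self-interaction problem $L\mapsto \theta_{\mu_L+\nu_0}(\alpha_0)$, so parts 1 and 2 are direct applications of Theorem \ref{thm-locminscale}(1) and part 3 of Theorem \ref{thm-mintransscale}(1). In each case I would first recall, from the literature, the required minimality of the lattice for the classical theta function $L\mapsto\theta_L(\alpha)$, verify that the hypotheses of the relevant theorem hold, and then conclude. The ``at all scales'' statements for $\mu,\nu\in\mathcal{P}_r^{cm}(\R^d)$ follow in every case from the final sentence of the corresponding theorem, requiring no additional argument.

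For part 1, I would recall that $\mathsf{D}_4$, $\mathsf{E}_8$ and the Leech lattice are strict local minimizers of $L\mapsto\theta_L(\alpha)$ in $\mathcal{D}_d$ for all $\alpha>0$ (for $d\in\{4,8,24\}$ respectively), as established via the spherical-design and strong-perfection criteria of Coulangeon and Coulangeon--Schürmann \cite{Coulangeon:2010uq,CoulSchurm2018}. In particular, each such lattice is a critical point of $L\mapsto\theta_L(\alpha)$ for all $\alpha>0$, which is precisely the first hypothesis of Theorem \ref{thm-locminscale}(1); the strict local minimality at each fixed $\alpha_0>0$ supplies the second. Applying Theorem \ref{thm-locminscale}(1) then yields strict local minimality at a low scale for any $\mu,\nu\in\mathcal{P}_r(\R^d)$, and at all scales when $\mu,\nu\in\mathcal{P}_r^{cm}(\R^d)$.

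Part 2 is handled identically, replacing the point-case input by the result of Sarnak--Strömbergsson \cite{SarStromb} (see also \cite{CoulSchurm2018}) that, for odd $d\geq 9$, the lattice $\mathsf{D}_d^+$ is a critical point of $L\mapsto\theta_L(\alpha)$ for all $\alpha>0$ and a strict local minimizer for $\alpha_0>\alpha_d$ for some threshold $\alpha_d$; a further application of Theorem \ref{thm-locminscale}(1) gives the claim. For part 3, I would instead invoke the universal optimality theorem of Cohn, Kumar, Miller, Radchenko and Viazovska \cite{CKMRV2Theta}, from which $\mathsf{E}_8$ (resp. the Leech lattice) is the unique global minimizer of $L\mapsto\theta_L(\alpha)$ in $\mathcal{D}_d$ for all $\alpha>0$, and is moreover a strict local minimizer for all $\alpha>0$; these are exactly the hypotheses of Theorem \ref{thm-mintransscale}(1), whose conclusion is the desired global minimality at a low scale, and at all scales in the completely monotone case.

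The argument is therefore essentially bookkeeping, and the only genuine subtlety --- the point I would be most careful about --- is verifying that the point-case hypotheses hold for \emph{all} $\alpha>0$ rather than at a single scale. Criticality for every $\alpha>0$ is what forces the lattice to carry enough symmetry (being a spherical $2$-design, i.e. strongly perfect), and it is the shared ingredient making both Theorem \ref{thm-locminscale} and Theorem \ref{thm-mintransscale} applicable. For $\mathsf{D}_d^+$ in part 2 this design property must be confirmed in odd dimensions, and for $\mathsf{E}_8$ and the Leech lattice in part 3 one must ensure that the universal optimality result indeed delivers uniqueness and strict local minimality simultaneously at every scale.
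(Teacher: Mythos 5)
Your proposal is correct and takes essentially the same route as the paper, which states this corollary without a separate proof precisely because it is the bookkeeping you describe: Theorem \ref{thm-locminscale}(1) for parts 1--2 and Theorem \ref{thm-mintransscale}(1) for part 3, fed by the point-case results of \cite{SarStromb,Coulangeon:2010uq,CoulSchurm2018} and the universal optimality theorem of \cite{CKMRV2Theta}. The only caveats are cosmetic rather than mathematical: the $\mathsf{D}_d^+$ input for odd $d\geq 9$ is due to Coulangeon--Sch\"urmann \cite{CoulSchurm2018} rather than Sarnak--Str\"ombergsson, and since $\mathsf{D}_d^+$ is not a Bravais lattice for odd $d$, one should formally note (as you partly do, and as the paper does for the honeycomb in Proposition \ref{cor-honey}) that the main theorems extend to such periodic sets by following the same lines of proof.
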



\medskip

\noindent \textbf{Acknowledgement.} I acknowledge support
from VILLUM FONDEN via the QMATH Centre of Excellence (grant No. 10059). Furthermore, I would like to thank the anonymous referee for her/his useful comments.

{\small
\bibliographystyle{plain} 
\bibliography{Masstheta}}

\end{document}